\acrodef{dfa}[DFA]{Deterministic Finite Automaton}
\acrodef{scltl}[scLTL]{syntactically cosafe LTL}
\acrodef{ltl}[LTL]{Linear Temporal Logic}
\newif\ifuseboldmathops
\newif\ifuseittextabbrevs
	\newcommand{\ie}{{\it i.e.}}
	\newcommand{\ie}{i.e.}
\newcommand{\Always}{\Box \, }
\newcommand{\Eventually}{\Diamond \, }
\newcommand{\until}{\mathsf{U} \, }
\newcommand{\weakuntil}{\mathsf{W} \, }
\newcommand{\abs}[1]{\lvert#1\rvert}
\newcommand{\card}[1]{\left|#1\right|}
\newcommand{\dist}[1]{\mathcal{D}(#1)}
\newcommand{\calF}{F}
\newcommand{\calS}{S \times Q}
\newcommand{\calD}{\mathcal{D}}
\newcommand{\win}{\mathsf{Win}} 
\newtheorem{assumption}{Assumption}
\theoremstyle{plain}
\newtheorem*{problem*}{Problem}
\newcommand{\hgame}{\mathcal{H}}
\newcommand{\calA}{\mathcal{A}}
\newcommand{\game}{\mathcal{G}}
\newcommand{\act}{Act}
\newcommand{\decoy}{\mathtt{decoy}}
\newcommand{\srAct}{\mathsf{SRActs}}
\newcommand{\dswin}{\mathsf{DSWin}}
\begin{document}
\title{Decoy Allocation Games on Graphs with Temporal Logic Objectives}
%
%\titlerunning{Abbreviated paper title}
% If the paper title is too long for the running head, you can set
% an abbreviated paper title here
%
\author{
Abhishek N. Kulkarni \inst{1} \orcidID{0000-0002-1083-8507}
\and 
Jie Fu \inst{1}\orcidID{0000-0002-4470-2827} \and
Huan Luo\inst{1}\orcidID{0000-0002-1578-9409}\and
Charles A. Kamhoua \inst{2}\orcidID{0000-0003-2169-5975}
\and 
Nandi O. Leslie
\inst{2}\orcidID{0000-0001-5888-8784}
}

\authorrunning{A. Kulkarni et al.}
% First names are abbreviated in the running head.
% If there are more than two authors, 'et al.' is used.
%
\institute{Worceter Polytechnic Institute, Worcester MA 01609, USA\\
\email{\{ankulkarni,jfu2\}@wpi.edu, hluo12@126.com}\\
\and
U.S. Army Research Laboratory, Adelphi, MD 20783, USA \\
\email{\{charles.a.kamhoua.civ,nandi.o.leslie.ctr\}@mail.mil}}
\maketitle              % typeset the header of the contribution
\begin{abstract}
We study a class of games, in which the adversary (attacker) is to satisfy a complex mission specified in linear temporal logic, and the defender is to prevent the adversary from achieving its goal. A deceptive defender can allocate decoys, in addition to defense actions, to create disinformation for the attacker. Thus, we focus on the problem of jointly synthesizing a decoy placement strategy and a deceptive defense strategy that maximally exploits the incomplete information the attacker about the decoy locations. We introduce a model of hypergames on graphs with temporal logic objectives to capture such adversarial interactions with asymmetric information. Using the hypergame model, we analyze the effectiveness of a given decoy placement, quantified by the set of deceptive winning states where the defender can prevent the attacker from satisfying the attack objective given its incomplete information about decoy locations. Then, we investigate how to place decoys to maximize the defender's deceptive winning region. Considering the large search space for all possible decoy allocation strategies, we incorporate the idea of compositional synthesis from formal methods  and  show that the objective function in the class of decoy allocation problem is  monotone and non-decreasing. We derive the sufficient conditions under which the objective function for the decoy allocation problem is submodular, or supermodular, respectively.  We show a  sub-optimal  allocation  can be efficiently computed by iteratively  composing the solutions of hypergames with a subset of decoys  and the solution of a hypergame given a single decoy.  We use a running example to illustrate the proposed method. 
\keywords{Games on Graphs \and Hypergames \and  Deception  \and   Temporal Logic}
\end{abstract}

% ===============================================
% SECTION   
% ===============================================
\section{Introduction}

%  Introduction

In security and defense applications, deception plays a key role to
mitigate the information and strategic disadvantages of the defender
against adversaries. In this paper, we investigate the design of
active defense with deception for a class of games on graphs, also known as $\omega$-regular games  \cite{gradel_automata_2002,bloem_graph_2018,chatterjee_survey_2012}. A game in this class captures the attack-defend sequential interaction in which the attacker is to complete an attack mission specified in
temporal logic \cite{manna_temporal_1992} and the defender is to mitigate attacks by selecting counter-actions and allocating decoys to create a disinformation to the attacker. We are interested in the following question: How to design the decoy allocation strategy so that the defender
can influence the attacker into taking (or not taking) certain actions that minimize the set of attacker's winning region? The winning region is defined as the set of game states from which the attacker has a strategy to successfully complete its attack mission irrespective of the defender's counter-strategy.
% given its incomplete information about decoys.

Games on graphs with temporal logic objectives have been studied
extensively in the synthesis of reactive programs \cite{bloem_graph_2018}. In a reactive program,  the system (player 1) is to synthesize
a program (a finite-memory strategy) to provably satisfy a desired behavior specification, no
matter which actions are taken by the uncontrollable environment (player 2). In
these games, players'  payoffs are
temporal goals and constraints, described using linear temporal logic formulas and a labeling
function. A player receives a payoff equal to one if the
\emph{labeling} over the outcome (state-sequence) of the game
satisfies its temporal logic formula. In our recent work \cite{kulkarni_deceptive_2021}, we have shown that a class of decoy-based deception can be captured by assuming that the defender has the true labels of game states but the attacker has incorrect labels. For example, a state labeled ``unsafe'' by the defender may be mislabeled as ``safe'' for the attacker. By modeling the interactions between the defender and the attacker as a hypergame, we developed the solutions of subjective rationalizable strategies for both players in this class of hypergames. The defender's subjective rationalizable strategy is by nature deceptive, as it ensures the security temporal logic specification to be satisfied by exploiting the attacker's misperception and mistakes in the attacker's subjective rationalizable strategy. We introduced deceptive winning region as the set of states (or finite game histories) from which   the defender can ensure to satisfy a security specification in this hypergame.%

However, an important problem remains: How to \emph{control the attacker's misinformation in the labeling function} so as to maximize the deceptive winning region? To restrict the freedom in crafting the disinformation, we formulate a  class of \emph{decoy-based deception game}: In this game, the defender can allocate a subset of states as hidden decoys or ``traps'', unknown to the attacker. During these interactions, the defender is to strategically select actions to lure the attacker into the traps, whereas the attacker plays rationally to satisfy her temporal logic objective given her subjective view of the interaction. In addition, the defender strategy should be \emph{stealthy}, in the sense that the attacker cannot realize a misperception exists before getting caught by one of the traps. To determine the decoy allocation, we employ the aforementioned solutions of hypergames \cite{kulkarni_deceptive_2021} to calculate the defender's deceptive sure-winning region given each individual decoys. 
  The selection of decoy locations is based on compositional synthesis \cite{filiot_antichains_2011,kulkarni_compositional_2018}, which answers, given the two deceptive sure-winning regions for  decoys allocated at two different states $s$ and $s'$, what is the deceptive sure-winning region when both states are allocated as decoys simultaneously?  We derive the sufficient conditions when the objective function for the decoy allocation problem is submodular, or supermodular, respectively.  Based on this, we can construct an under-approximation of the deceptive sure-winning regions incrementally (in polynomial time), instead of having to solve a combinatorially large number of hypergames for all possible decoy configurations.

 \paragraph{Related Work}
 Decoy allocation, also called honeypot allocation and camouflage, has been studied in recent years with applications to cyber- and physical security problems. In \cite{pibil_game_2012,kiekintveld_game-theoretic_2015}, the authors propose  a game-theoretic method to place honeypots in a network so as to maximize the probability that the attacker attacks a honeypot and not a real system. In their game formulation, the defender decides where to insert honeypots in a network,  and the attacker chooses one server to attack and receives different payoffs when attacking a real system (positive reward) or a honeypot (zero reward).  The game is imperfect information as the real systems and honeypots are indistinguishable for the attacker. By the solution of imperfect information games, the defender's honeypot placement strategy is solved to minimize the attacker's rewards. 

Security games \cite{sinha_stackelberg_2018,kiekintveld_computing_2009} are another class of important models for resource allocation in adversarial environments.  In  \cite{thakoor_cyber_2019}, the authors formulate a security game (Stackelberg game) to allocate limited decoy resources in a cybernetwork to mask network configurations from the attacker. This class of deception manipulates the adversary's perception of the payoffs and thus causes the adversary to take (or not  to take) certain actions that aid the objective of the defender.
 In \cite{durkota_optimal_2015}, the authors formulate an Markov decision process to assess the effectiveness of a fixed honeypot allocation in an attack graph, which  captures multi-stage lateral movement attacks in a cybernetwork and dependencies between vulnerabilities \cite{jha_two_2002,ouScalableApproachAttack2006}.   In \cite{anwar_honeypot_2020}, the authors analyze the honeypot allocation problem for attack graphs using normal-form games, where the defender allocates honeypots that changes the payoffs matrix of players. The optimal allocation strategy is determined using the minimax theorem. 
  The attack graph is closely related to our game on graph model, which   generalizes the attack graph to   \emph{attack-defend game graphs} \cite{jiang_optimal_2009,aslanyan_quantitative_2016} by incorporating the defender counter-actions in active defense.

There are several key distinctions between our work and the prior work. First, our work focuses on a qualitative approach to decoy allocation instead of a quantitative one, which often requires solving an optimization problem over a well-defined reward/cost function. In the qualitative approach, we represent the attacker's goal using a linear temporal logic formula, which captures rich, qualitative behavioral objectives such as reachability, safety, recurrence, persistence or a combination of these. Second, we show how to incorporate the attacker's misinformation about decoy locations into a $\omega$-regular hypergame model by representing it as labeling misperception. Hypergames \cite{bennett_hypergame_1986,sasaki_hierarchical_2016,wang_solution_1989} are a class of games with asymmetric (one-sided incomplete) information in which different players might play according to different perceptual games that capture the information and higher-order information known to that player. While the underlying idea behind our game model is similar to ``indistinguishable honeypots'' discussed in \cite{pibil_game_2012}, we are able to leverage the solution approaches for hypergames to address decoy allocation problem. Third, we solve for a stealthy strategy for the defender, which ensures that defender's actions will not inform the attacker that deceptive tactics are being used. Lastly, we borrow the idea of compositional reasoning from formal methods to find approximately optimal solutions for the decoy allocation problem for this class of hypergames.

The paper is structured as follows. In Sec.~\ref{sec:problem-formulation}, we discuss the preliminaries of attack-defend game on graph model and define the problem statement. In Sec.~\ref{sec:main-result}, we present the main results of this paper including an algorithm for the decoy allocation based on the ideas of deceptive synthesis and compositional synthesis. We employ a running example to provide intuition and illustrate the correctness as well as (near-)optimality of the proposed algorithm. Sec.~\ref{sec:conclusion} concludes the paper and discusses the future directions.

% ===============================================
% SECTION   
% ===============================================
\section{Problem Formulation}
    \label{sec:problem-formulation}
    % Section: Problem Formulation

% -----------------------------------------------
% SUB-SECTION   
% -----------------------------------------------
\subsection{Attack-Defend Games on Graph}
    \label{subsect:game-on-graph}

In a zero-sum two-player \textit{game on graph}, player 1 (P1, pronoun `he') plays against  player 2 (P2, pronoun `she') to satisfy a given temporal logic formula.  Formally, a game on graph consists of a tuple $\game = \langle G, \varphi \rangle$, where $G$ is a \emph{game arena} modeling the dynamics of the interaction between P1 and P2, and $\varphi$ is the temporal logic specification of P1. As the game is zero-sum, the temporal logic specification of P2 is $\neg \varphi$, that is, the negation of P1's specification.

\begin{definition}[Game Arena]
    \label{def:game-arena}
    A two-player turn-based, deterministic game arena between  P1 and P2 is a tuple 
    \[ 
        G = \langle S, \act, T, AP,  L \rangle,
    \] 
    where 
    \begin{itemize}
        \item $S = S_1 \cup S_2$ is a finite set of  states partitioned into two sets $S_1$ and $S_2$. At a state  in $S_1$, P1 chooses an action. At a state  in $S_2$, P2 selects an action;
        
        \item $\act = \act_1 \cup \act_2$ is the set of actions. $\act_1$ (resp., $\act_2$) is the set of actions for P1 (resp., P2); 
        
        \item $T : (S_1 \times \act_1)\cup (S_2 \times \act_2) \rightarrow S$ is a \emph{deterministic} transition function that maps a state-action pair to a next state;
        
        %  \item $s_0$ is the initial state of the game.
        
        \item $AP$ is a set of atomic propositions;
        
        \item $L: S\rightarrow 2^{AP}$ is the labeling function that maps each state $s\in S$ to a set $L(s)\subseteq {AP}$ of atomic propositions that evaluate to true at that state.
    \end{itemize}
\end{definition}

A \emph{run} in $G$ is a (finite/infinite) ordered sequence of states $\rho = (s_0, s_1, \ldots)$ such that for any $i > 0$, $s_{i} = T(s_{i-1}, a)$ for some $a \in \act$. Given the labeling function $L$, every run   $\rho$ in $G$ can be mapped to a word over an alphabet $\Sigma = 2^{AP}$ as $w = L(\rho) = L(s_0) L(s_1) \ldots$.

In this paper, we use \ac{ltl} \cite{manna_temporal_1992} to define the objectives of P1 and P2. Formally, an \ac{ltl} formula is defined as
\[
    \varphi ::= p \mid \neg \varphi \mid \varphi \land \varphi \mid \varphi \lor \varphi \mid \bigcirc \varphi \mid \varphi {\until} \varphi \mid \varphi {\weakuntil}\varphi
\]
where $p \in {AP}$ is an atomic proposition, $\neg$ (negation), $\land$ (and), and $\lor$ (or) are Boolean operators, and $\bigcirc$ (next), $\until$ (strong until) and $\weakuntil$ (weak until) are temporal operators. Formula $\bigcirc \varphi$ means that the formula $\varphi$ will be true in the next state. Formula $\varphi_1 \until \varphi_2$ means that $\varphi_2$ will be true in some future time step, and before that $\varphi_1$ holds true for every time step. Formula $\varphi_1\weakuntil \varphi_2$ means that $\varphi_1$ holds true until $\varphi_2$ is true, but does not require that $\varphi_2$ becomes true. We define two additional temporal operators: $\Eventually$ (eventually) and $\Always$ (always) as follows: $\Eventually \varphi = \top \until \varphi$ and $\Always \varphi = \neg \Eventually \neg \varphi$.

Given a word $w \in \Sigma^\omega$, let $w[i]$ be the $i$-th element in the word and $w[i\ldots]$ be the subsequence of $w$ starting from the $i$-th element. For example, for a word $w=abc$, $w[0] = a$ and $w[1\ldots] = bc$. We write $w \models \varphi$ if the word $w$ satisfies the temporal logic formula $\varphi$. 
The semantics of \ac{ltl} are defined as follows. 
\begin{itemize}
    \item{\makebox[3cm]{$w\models p$\hfill} if $p \in w[0]$;}
    
    \item{\makebox[3cm]{$w\models \neg \varphi$\hfill} if $w\not\models \varphi$;}
    
    \item{\makebox[3cm]{$w\models \varphi_1\land \varphi_2$\hfill} if $w \models \varphi_1$ and $w\models \varphi_2$;}

    \item{\makebox[3cm]{$w\models \bigcirc \varphi$\hfill} if $w[1\ldots] \models \varphi$;}

    \item{\makebox[3cm]{$w\models \varphi \until \psi$\hfill} if $\exists i \ge 0$, $w[i\ldots] \models \psi$ and $\forall 0\le j<i$, $w[j\ldots]\models \varphi$.}
    
    \item{\makebox[3cm]{$w\models \varphi \weakuntil \psi$\hfill} if  either     $w \models \varphi\until\psi$ or  $\forall 0\le j$, $w[j\ldots]\models \varphi$.}
\end{itemize}

A subclass of \ac{ltl} formula, called \ac{scltl}, does not include the weak until operator $\weakuntil$ and allows  the negation operator $\neg$ to only occur before an atomic proposition. An \ac{scltl} formula can be equivalently represented by a finite-state deterministic automaton with regular acceptance conditions, defined as follows.

\begin{definition}[Specification DFA]
Given an \ac{scltl} formula $\varphi$, its corresponding specification \ac{dfa} is a tuple \[\calA = \langle Q, \Sigma, \delta, \iota, Q_F \rangle,\] which includes a finite set $Q$ of states, a finite set $\Sigma =2^{AP}$ of symbols, a deterministic transition function $\delta: Q\times \Sigma \rightarrow Q$, a unique initial state $\iota \in Q$, and a set $Q_F \subseteq Q$ of final states.
\end{definition}

The transition function is recursively extended as $\delta(q,aw)=\delta( \delta(q, a),w )$ for given $a \in \Sigma$ and $w\in \Sigma^\ast$, where $\Sigma^\ast$ is the set of all finite words (also known as the Kleene closure of $\Sigma$). A word $w$ is \emph{accepted} by the \ac{dfa} if and only if $\delta(q, u)\in Q_F$ and $u$ is a prefix of $w$, \ie, $w=uv$ for $u\in \Sigma^\ast$ and $v\in\Sigma^\omega$, where $\Sigma^\omega$ is the set of all infinite words defined over $\Sigma$. A word is accepted by the specification \ac{dfa} $\calA$ if and only if it satisfies the \ac{ltl} formula $\varphi$.

Putting together the game arena $G$ and the \ac{scltl} objective $\varphi$ of P1, we can formally define a graphical model for the zero-sum game $\game$.

\begin{definition}[Product game]
    \label{def:product-game}
    Let $G = \langle S, \act, T, AP, L \rangle$ be a game arena and let $\calA = \langle Q, \Sigma, \delta, \iota, Q_F \rangle$ be the specification \ac{dfa} given the \ac{ltl} formula $\varphi$. Then, the product game $\game = G \otimes \calA$ is the tuple,
    \[
        \game = \langle \calS, \act, \Delta, \calF \rangle,
    \]
    where
    \begin{itemize}
        \item $S\times Q$   is a set of states partitioned into P1's states $S_1 \times Q$ and P2's states $S_2 \times Q$. 
        
        \item $\Delta: (S_1\times Q \times \act_1) \cup (S_2\times Q \times \act_2) \rightarrow \calS$ is a \textit{deterministic} transition function that maps a game state $(s, q) \in \calS$ and an action $a \in \act$ to a next state $(s',q') \in \calS$ such that $s' = T(s, a)$ and $q' = \delta(q, L(s'))$;
    
        \item $\calF = S \times Q_F$ is the set of final states in $\game$.
    \end{itemize}
\end{definition}
It is noted that we did not include an initial state in the definition of the game arena. This is because any state in $S$ can be selected to be the initial state. Let $s_0\in S$ be the initial state of the game arena, the corresponding initial state in the product game is $q_0=\delta(\iota, L(s_0))$. 
By construction, for each run $\rho=(s_0, s_1, \ldots)$ in $G$, there is a unique run $\hat \rho = (s_0, q_0), (s_1, q_1), \ldots$
in the product game, where $q_0 = \delta(\iota, L(s_0))$ for $i = 0$ and $q_i=\delta(q_{i-1}, L(s_i))$ for $i\ge 1$. The run $\rho$ satisfies the \ac{scltl} formula $\varphi$ if and only if $L(\rho)\models \varphi$ and as a result of construction, there exists $(s_i, q_i) \in \hat \rho$ for some $i \geq 0$ such that $(s_i, q_i) \in \calF$. Thus, P1's objective of satisfying an \ac{scltl} specification over the game arena $G$ is reduced to that of reaching one of the final states $\calF$ in product game $\game$. In the zero-sum game,  P2's objective of satisfying $\neg \varphi$   is reduced to preventing P1 from reaching any final states in $\calF$.

A memoryless, randomized \emph{strategy} for $i$-th player, for $i \in \{1, 2\}$, is a function $\pi_i: S_i \times Q \rightarrow \dist{\act_i}$, where $\dist{\act_i}$ is the set of discrete probability distributions over $\act_i$. 
It is noted that a memoryless strategy in a product game is a finite-memory strategy in game arena.
A strategy is deterministic if $\pi_i(\rho)$ is a Dirac delta function. 
We say that  player $i$ commits to (or follows) a strategy $\pi_i$ if and only if for a given state  $(s,q)$, if $\pi_i(s,q)$ is defined, then an action is sampled from the distribution $\pi_i(s,q)$, otherwise, player $i$ selects an action at random.
 Let $\Pi_i$ be the set of memoryless strategies of player $i$ in the product game. 

 A strategy  $\pi_1 \in \Pi_1$ is said to be sure-winning for P1 if, for every P2's strategy $\pi_2 \in \Pi_2$, P1 can ensure to reach $\calF$ in finitely many steps. A strategy $\pi_2\in \Pi_2$ is sure-winning for P2 if for every P1's strategy $\pi_1 \in \Pi_1$, P2 can ensure the game to stay in $(S \times Q) \setminus \calF$ for infinitely many steps.  The product game is known to be determined \cite{gradel_automata_2002,mcnaughton_infinite_1993}. That is, at any state $(s,q)$, only one of the players has a winning strategy and the winning strategy is memoryless.

The set of states in the product game $\game$ from which P1 (resp. P2) has a sure-winning strategy are called the \emph{sure-winning region} for P1 (resp. P2), denoted as $\win_1$ (resp. $\win_2$). 
Players' sure-winning regions can be computed by using the Alg.~\ref{alg:sure-win}
by letting $S_i\times Q$ to be $V_i$, $\act_i$ to be $A_i$, the transition function $\Delta$ and $F$ are the same components in $\game$. The interested readers are referred to Chap 2 of \cite{gradel_automata_2002} for more details.

\begin{algorithm}
\SetAlgoLined
\KwIn{A reachability game $\langle V = V_1\cup V_2, A_1\cup A_2, \Delta, F\rangle$ where $V_i$ are states where player $i$ takes an action,  $A_i$ are player $i$'s actions, $\Delta: V \times A\rightarrow V$ and P1's goal is to reach the set $F$ and P2's goal is to stay within $V\setminus F$.}
\KwOut{The winning regions $\win_1$ and $\win_2$ for P1 and P2.}
$Z_0\gets \calF$,
$Z_1\gets \emptyset$, $k\gets 0$\;
\While{$Z_{k+1}\ne Z_k$}{
 $\mathsf{Pre}_1(Z_k) \gets \{v\in V_1 \mid \exists a \in A_1 \text{ s.t. } \Delta(v, a) \in Z_k \}$\;
 $\mathsf{Pre}_2(Z_k) \gets \{v\in V_2 \mid \forall b \in A_2 \text{ s.t. } \Delta(v, b) \in  Z_k\}$\;
 $Z_{k+1} \gets Z_k \cup \mathsf{Pre}_1(Z_k) \cup \mathsf{Pre}_2(Z_k)$\; 
 $k \gets k+1$\; 
}
$\win_1\gets Z_k$, $\win_2\gets (V_1\cup V_2)\setminus \win_1$\;
\Return $\win_1,\win_2$.
 \caption{\textsc{Sure-Win}: Compute Player's Sure-Winning Regions of Zero-Sum Product Games with Reachability Objective \protect\cite{mcnaughton_infinite_1993,gradel_automata_2002}.}
 \label{alg:sure-win}
 \end{algorithm}
 
The sure-winning strategy is defined for P1 as follows: Let $Z_1,Z_2,\ldots Z_k$ be the sequence of sets generated by Alg.~\ref{alg:sure-win}, for a state $v\in (Z_i\setminus Z_{i-1})\cap V_1$, let $a$ be the action that $\Delta(v,a)\in Z_{i-1}$, then $\pi_1(v) = a$ (by construction, such an action $a$ exists). P2's sure-winning strategy is constructed as: For each $v\in \win_2$, $\pi_2(v)=a$ such that $\Delta(v,a)\in \win_2$.
Clearly, there may exist more than one sure-winning strategies for each player.

% -----------------------------------------------
% SUB-SECTION   
% -----------------------------------------------
\subsection{Formulating the Decoy Allocation Problem}
    \label{subsect:decoy-allocation-problem}

We consider an interaction between the defender (P1, pronoun `he') and the attacker (P2, pronoun `she') in which the defender can use decoys to introduce incorrect information to the attacker about the game. Our goal is to investigate \textit{how to create the attacker's misinformation by allocating the decoys so as to minimize the size of the sure-winning region of the attacker.}

We now formalize the problem of decoy allocation using the game arena (Def.\ref{def:game-arena}). Let $\mathtt{decoy}$ be an atomic proposition that evaluates to true at a state if the state is equipped with a decoy. 

\begin{assumption}
    \label{assume:information-structure}
    In P2's knowledge of the game arena, no state is labeled as decoy, \ie,~ $\decoy \notin L(s)$ for all $s \in S$. 
\end{assumption}

Assumption~\ref{assume:information-structure} captures one important function of decoys---\textit{concealing fictions} \cite{heckman_bridging_2015}. The idea behind \textit{concealing fictions} is that P1 simulates the decoy states to function like a real system. As a result, P1 and P2 play with different subjective views of their interaction. With this in mind, we formalize the notion of \textit{perceptual game arena} of the players to characterize these subjective views.

\paragraph{Perceptual Game Arena.} Given that P2 does not know about the decoys, we distinguish between her view of the game arena from P1's view by introducing a different labeling function for P2. Let P1's perceptual game arena be $G^1 = G = \langle  S, \act,T, AP, L \rangle$. That is, P1 knows the ground truth. And, let P2's perceptual game arena be $G^2= \langle  S, \act,T, AP, L^2\rangle$ such that for any $s \in S$, we have $L_2(s) = L(s) \setminus \{\decoy\}$. In other words, if a state is not a decoy, then P1 and P2 share the same label for that state. If it is a decoy, then P1 knows that the proposition $\decoy$ evaluates to true at that state, but P2 does not.

\paragraph*{The Attacker and Defender Temporal Logic Objectives.} Over the perceptual game arenas $G$ and $G^2$, P1 and P2 aim to satisfy their \ac{ltl} objectives. We consider that P2's objective is specified by an \ac{scltl} formula $\varphi_2$, whose specification \ac{dfa} 
is  $\calA_2=\langle Q, \Sigma, \delta_2, \iota, Q_F \rangle$.   

Given P2's perceptual game arena $G^2$ and the specfication \ac{dfa} $\calA_2$, we can construct a perceptual product game of P2 as $\game_2 = G^2 \otimes \calA_2$. 
P1's objective is an \ac{ltl} formula  $\neg \varphi_2 \weakuntil \decoy$. That is, P1 satisfies the goal by  preventing P2 from satisfying $\varphi_2$ before reaching a decoy. However, reaching a decoy is not necessary due to the semantics of the ``weak until'' operator.

\setcounter{example}{1}
\begin{example}[Part 1]
    \label{ex:running-example-1}
    Consider a game arena as shown in Fig.~\ref{fig:arena-p2} consisting of 15 states.
    At a circle state, P1 takes an action, and at a square state, P2 takes an action. As the actions are deterministic, we use edges to indicate players' actions. For example, $(c,f), (c,g), (c,h)$ are possible actions for P2 at the state $c$.
    Over this game arena, P2 wants to satisfy an \ac{scltl} specification $\varphi_2 = \Eventually (n \lor o) \land (f \implies \Eventually n) \land (g \implies \Eventually o)$, which, in words, means that P2 must reach either the state $n$ or $o$ with the condition that whenever she visits the state $f$, she must visit $n$ and whenever she visits $g$, she must visit $o$. If she does not visit either $f$ or $g$, then she can visit either $n$ or $o$ to successfully complete her objective. The \ac{dfa} equivalent to $\varphi_2$ is shown in  Fig.~\ref{fig:dfa-varphi2}.

    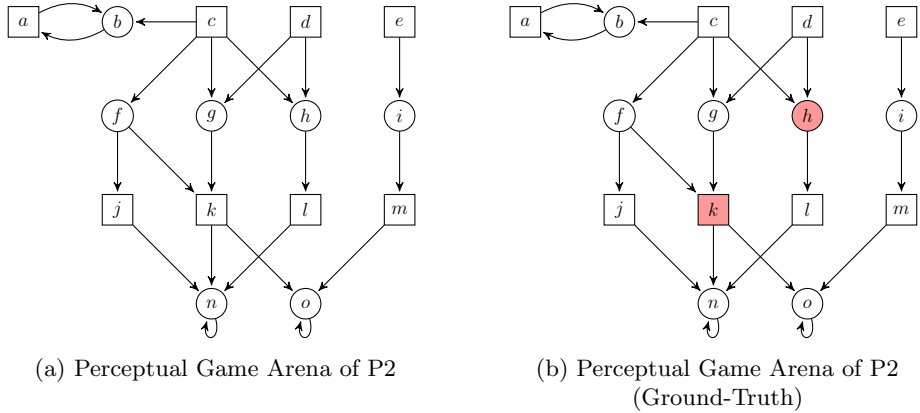
\begin{figure}[t]
    \centering
    \subfloat[Perceptual Game Arena of P2\label{fig:arena-p2}]{%
      \begin{tikzpicture}[->,>=stealth',shorten >=1pt,auto,node distance=2.5cm, scale = 0.5,transform shape]
    
      \node[state,rectangle] (a) []             {\Large $a$};
      \node[state] (b) [right of=a]             {\Large $b$};
      \node[state,rectangle] (c) [right of=b]   {\Large $c$};
      \node[state,rectangle] (d) [right of=c]   {\Large $d$};
      \node[state,rectangle] (e) [right of=d]   {\Large $e$};
      \node[state] (f) [below of=b]             {\Large $f$};
      \node[state] (g) [below of=c]             {\Large $g$};
      \node[state] (h) [below of=d]             {\Large $h$};
      \node[state] (i) [below of=e]             {\Large $i$};
      \node[state,rectangle] (j) [below of=f]   {\Large $j$};
      \node[state,rectangle] (k) [below of=g]   {\Large $k$};
      \node[state,rectangle] (l) [below of=h]   {\Large $l$};
      \node[state,rectangle] (m) [below of=i]   {\Large $m$};
      \node[state] (n) [below of=k]   {\Large $n$};
      \node[state] (o) [below of=l]   {\Large $o$};
    
      \path (a) edge[bend left]   node {} (b)
            (b) edge[bend left]   node {} (a)
            (c) edge              node {} (b)
            (c) edge              node {} (f)
            (c) edge              node {} (g)
            (c) edge              node {} (h)
            (d) edge              node {} (g)
            (d) edge              node {} (h)
            (e) edge              node {} (i)
            (f) edge              node {} (j)
            (f) edge              node {} (k)
            (g) edge              node {} (k)
            (h) edge              node {} (l)
            (i) edge              node {} (m)
            (j) edge              node {} (n)
            (k) edge              node {} (n)
            (k) edge              node {} (o)
            (l) edge              node {} (n)
            (m) edge              node {} (o)
            (n) edge [loop below]             node {} (n)
            (o) edge [loop below]              node {} (o)
            ;
            
    \end{tikzpicture}
    }\hfill
    \subfloat[\centering Perceptual Game Arena of P2 (Ground-Truth)\label{fig:arena-p1}]{%
      \begin{tikzpicture}[->,>=stealth',shorten >=1pt,auto,node distance=2.5cm, scale = 0.5,transform shape]
    
      \node[state,rectangle] (a) []                         {\Large $a$};
      \node[state] (b) [right of=a]                         {\Large $b$};
      \node[state,rectangle] (c) [right of=b]               {\Large $c$};
      \node[state,rectangle] (d) [right of=c]               {\Large $d$};
      \node[state,rectangle] (e) [right of=d]               {\Large $e$};
      \node[state] (f) [below of=b]                         {\Large $f$};
      \node[state] (g) [below of=c]                         {\Large $g$};
      \node[state,fill=red!40] (h) [below of=d]             {\Large $h$};
      \node[state] (i) [below of=e]                         {\Large $i$};
      \node[state,rectangle] (j) [below of=f]               {\Large $j$};
      \node[state,rectangle,fill=red!40] (k) [below of=g]   {\Large $k$};
      \node[state,rectangle] (l) [below of=h]               {\Large $l$};
      \node[state,rectangle] (m) [below of=i]               {\Large $m$};
      \node[state] (n) [below of=k]                         {\Large $n$};
      \node[state] (o) [below of=l]                         {\Large $o$};
    
      \path (a) edge[bend left]   node {} (b)
            (b) edge[bend left]   node {} (a)
            (c) edge              node {} (b)
            (c) edge              node {} (f)
            (c) edge              node {} (g)
            (c) edge              node {} (h)
            (d) edge              node {} (g)
            (d) edge              node {} (h)
            (e) edge              node {} (i)
            (f) edge              node {} (j)
            (f) edge              node {} (k)
            (g) edge              node {} (k)
            (h) edge              node {} (l)
            (i) edge              node {} (m)
            (j) edge              node {} (n)
            (k) edge              node {} (n)
            (k) edge              node {} (o)
            (l) edge              node {} (n)
            (m) edge              node {} (o)
            (n) edge [loop below]             node {} (n)
            (o) edge [loop below]              node {} (o)
            ;
    
    \end{tikzpicture}
    }
    
    \caption{Perceptual Game Arenas of P1 and P2 in Ex.~\ref{ex:running-example-1}.}
    \label{fig:arena}
    \end{figure}

    Suppose that P1 allocates the states $D = \{h, k\}$ as decoys. The perceptual game arenas of P1 and P2 under decoy allocation $D$ are now different. P1's perceptual game arena in Fig.~\ref{fig:arena-p1} has the same underlying graph as the perceptual game arena of P2 shown in Fig.~\ref{fig:arena-p2} but P1 has the knowledge of where the decoys are placed.  We have $\decoy \in L(h)$ and $\decoy \in L(k)$ but $\decoy \notin L(s)$ for any state $s$ except $s = h, k$. 
    Figure~\ref{fig:perceptual-games-p2} shows the perceptual product games of   P2.
    A transition $(c,0)\rightarrow (f,1)$ 
    is based on the transition $c\rightarrow f$ and $\delta_2(0, L(f)) = 1$ in the \ac{dfa} $\calA_2$ (shown in Fig.~\ref{fig:dfa-varphi2}). 
     We omit all nodes that do not have a path leading to $(n, 3)$ or $(o, 3)$. 
    
    \begin{figure}[htp]
    \centering
    \subfloat[Specification \ac{dfa} for $\varphi_2$  
        \label{fig:dfa-varphi2}]{
        \begin{tikzpicture}[->,>=stealth',shorten >=1pt,auto,node distance=3 cm, scale = 0.65,transform shape]
    \node[state,initial below] (0) [] {\large$0$};
          \node[state] (1) [above right of=0] {\large$1$};
          \node[state] (2) [below right of=0] {\large$2$};
        %   \node[state] (sink) [left of=0] {\large$\mathsf{sink}$};
          \node[state,accepting] (3) [right of=0, node distance=4cm] {\large$3$};

          \path 
          (0) edge[loop left]   node {$\neg (f \lor g \lor n \lor o)$} (0)
          (0) edge[right]   node {$f$} (1)
          (0) edge[]   node {$g$} (2)
          (0) edge[]   node {$\neg f \land \neg g \land (n \lor o)$} (3)
        %   (0) edge[]   node {$\mathsf{decoy}$} (sink)
        %   (1) edge[bend right]   node {$\mathsf{decoy}$} (sink)
          (1) edge[loop above]   node {$\neg n$} (1)
        %   (2) edge[bend left]   node {$\mathsf{decoy}$} (sink)
          (2) edge[loop below]   node {$\neg o$} (2)
          (1) edge[]   node {$n$} (3)
          (2) edge[]   node {$o$} (3)
          (3) edge[loop right]   node {$\top$} (3)
        %   (sink) edge[loop left]   node {$\top$} (sink)
          ;
        \end{tikzpicture}}
    \subfloat[Perceptual Product Game of P2\label{fig:perceptual-games-p2}]{%
      \begin{tikzpicture}[->,>=stealth',shorten >=1pt,auto,node distance=2.5cm, scale = 0.45,transform shape]

          \node[state,rectangle] (a) [] {\Large $(a, 0)$};
          \node[state] (b) [right of=a] {\Large $(b, 0)$};
          \node[state,rectangle] (c) [right of=b] {\Large $(c, 0)$};
          \node[state,rectangle] (d) [right of=c] {\Large $(d, 0)$};
          \node[state,rectangle] (e) [right of=d] {\Large $(e, 0)$};
          \node[state] (f) [below of=b] {\Large $(f, 1)$};
          \node[state] (g) [below of=c] {\Large $(g, 2)$};
          \node[state] (h) [below of=d] {\Large $(h, 0)$};
          \node[state] (i) [below of=e] {\Large $(i, 0)$};
          \node[state,rectangle] (j) [below left=1.55cm and 1.3cm of f] {\Large $(j, 1)$};
          \node[state,rectangle] (k1) [below of=f] {\Large $(k, 1)$};
          \node[state,rectangle] (k2) [below of=g] {\Large $(k, 2)$};
          \node[state,rectangle] (l) [below of=h] {\Large $(l, 0)$};
          \node[state,rectangle] (m) [below of=i] {\Large $(m, 0)$};
          \node[state,accepting] (n) [below of=k2] {\Large $(n, 3)$};
          \node[state,accepting] (o) [below of=l] {\Large $(o, 3)$};
        
          \path (a) edge[bend left]   node {} (b)
                (b) edge[bend left]   node {} (a)
                (c) edge              node {} (b)
                (c) edge              node {} (f)
                (c) edge              node {} (g)
                (c) edge              node {} (h)
                (d) edge              node {} (g)
                (d) edge              node {} (h)
                (e) edge              node {} (i)
                (f) edge              node {} (j)
                (f) edge              node {} (k1)
                (g) edge              node {} (k2)
                (h) edge              node {} (l)
                (i) edge              node {} (m)
                (j) edge              node {} (n)
                (k1) edge              node {} (n)
                (k2) edge              node {} (o)
                (l) edge              node {} (n)
                (m) edge              node {} (o)
                (n) edge [loop below]             node {} (n)
            (o) edge [loop below]              node {} (o)
            ;
        \end{tikzpicture}
     }
    
    \caption{P2's specification \ac{dfa} and the perceptual product game in Ex.~\ref{ex:running-example-1}.}
    \label{fig:perceptual-games}

    \end{figure}
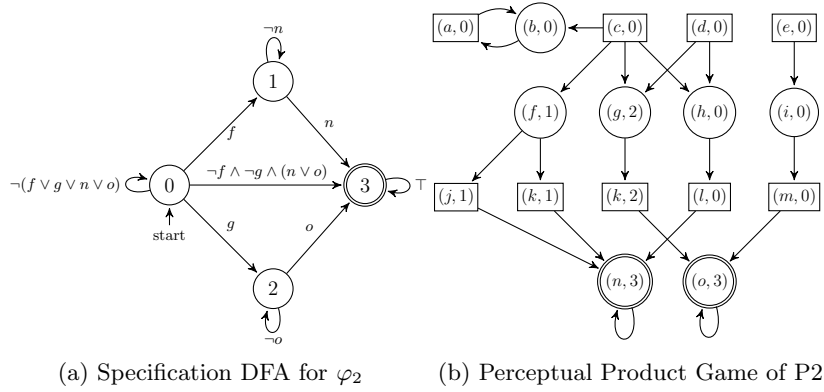
    
\end{example}

We now formalize our problem statement.
\begin{problem}
    \label{prob:decoy-placement}
    Given a set of  $k $ decoys and a set $\calD\subseteq S$ of states at which decoys can be placed,  identify the decoy locations $D\subseteq \calD$ with $\card{D} \leq k$ such that by letting $\decoy \in L(s)$ for each $s\in D$,  the number of states in the product game $\game_1$  from which P1 has a strategy to satisfy the security specification $\varphi_1$ is maximized, given that P2 may choose any counter-strategy that she considers rational in her perceptual game, $\game_2$.
\end{problem}
The objective of P1 is intuitively understood as to maximize the set of system states protected by the defense strategy.

% ===============================================
% SECTION   
% ===============================================
\section{Main Result}
    \label{sec:main-result}
    % Section: Main Result 

Our proposed solution to Problem~\ref{prob:decoy-placement} is based upon two key ideas from formal methods and hypergame theory, namely (a) deceptive synthesis, and (b) compositional synthesis. In Sec.~\ref{subsect:hgame}, we introduce  deceptive synthesis to construct a strategy for P1 to deceive P2 into reaching a pre-defined decoy set in finitely many steps by exploiting the  incomplete information of P2. The strategy is called \emph{deceptive sure-winning strategy} and depends on the chosen set of decoys. Then, in Sec.~\ref{subsect:compsynth}, we introduce a compositional synthesis approach to identify an approximately optimal allocation of decoys. 

% -----------------------------------------------
% SUB-SECTION   
% -----------------------------------------------
\subsection{Deceptive Synthesis: Hypergames on Graphs}
    \label{subsect:hgame}

Consider a set $D\subseteq S$ of states are allocated with decoys, unknown to P2. In such an interaction, as seen in Sec.~\ref{subsect:decoy-allocation-problem}, the players have different perceptual game arenas that share the same set of states, actions, and transitions but different labeling functions. 
We introduce  a model of hypergame on graph to integrate the games $\game_1$ of P1 and $\game_2$ of P2 into a single graphical model. 

\begin{definition}[Hypergame on Graph  (modified from \protect{\cite{kulkarni_deceptive_2021}}\footnote{
Def.~\ref{def:hgame} is a simplified version of \cite[Def. 6]{kulkarni_deceptive_2021}, which considers the general case when P1 and P2's objectives are both general \ac{scltl} formulas, not necessarily in the current form.})]
    \label{def:hgame}
    Given the perceptual game arenas $G = \langle S, \act, T, AP, L \rangle$ and $G^2 = \langle S, \act, T, AP, L^2 \rangle$, and P2's specification \ac{dfa} $\calA_2 = \langle Q, 2^{AP}, \delta_2, \iota, Q_F\rangle$,  let $D \subsetneq S$ be a set of states such that $\decoy \in L(s)$.  The \emph{hypergame on graph} given the players' objectives $\neg \varphi_2\weakuntil \decoy$ for P1 and $\varphi_2$ for P2 is a transition system
    \[
        \hgame_D = \langle S \times Q, \act, \Delta, F_D, F_2 \rangle, 
    \]
    where
    \begin{itemize}
        \item $ S \times Q $ is the set of states;
        
        \item $\Delta: (S_1\times Q \times \act_1) \cup (S_2\times Q \times \act_2) \rightarrow \calS$ is a \emph{deterministic} transition function such that $\Delta((s, q), a) = (s', q')$ if and only if $s' = T(s, a)$ and $q' = \delta_2(q, L_2(s'))$;
        
       \item $F_D = \{(s,q) \mid \decoy \in L(s)\}$ is the set of states which P1 must reach in order to satisfy $\neg \varphi_2 \weakuntil \decoy$; 
       
        \item $F_2 = \{(s, q) \mid q\in Q_F\}$ is the set of final states which P2 must reach in order to satisfy $\varphi_2$.
    \end{itemize}
\end{definition}

It is noted that the sets of states, actions,  transitions, and P2's final states $F_2$ in $\hgame_D$ are defined exactly as these components in P2's perceptual product game $\game_2$ (see Def.~\ref{def:product-game}). The additional set $F_D$ is introduced to represent P1's objective. 

Let us denote the sure-winning region of player $i$ in player $j$'s perceptual game $\game_j$ by $\win_i^j$. The attacker's perceptual winning regions can be solved with the attacker's reachability  game using Alg.~\ref{alg:sure-win} by letting $V_1\coloneqq S_2\times Q$, $V_2\coloneqq S_1\times Q$, $A_1\coloneqq \act_2$, $A_2\coloneqq \act_1$, $\Delta$ is the same as in $\hgame_D$, and $F\coloneqq F_2$. The following observations are noted:

\begin{itemize}
    \item For every state $(s, q) \in \win_1^2$ (P1's sure-winning region perceived by P2), P1 can ensure to satisfy $\neg \varphi_2$ no matter which strategy P2 uses. Decoys are not  needed for states within $\win_1^2$.
    
    \item For every state  $(s, q) \in \win_2^2$ (P2's sure-winning region perceived by P2), P2 can ensure satisfying $\varphi_2$ when no decoy is used. However, when decoys are introduced, P1 can exploit P2's lack of knowledge about the decoys and lure P2 into reaching decoys before P2 is able to satisfy $\varphi_2$.
\end{itemize}

It is  known \cite{baier2008principles} that we can rewrite $\neg \varphi_2 \weakuntil \decoy$ using the temporal operators: $\until$ (until) and $\Always$ (always), as $(\neg \varphi_2 \until \decoy) \lor \Always \neg \varphi_2$, where $\Always \varphi = \neg \Eventually \neg \varphi$. 
When the game state is within P2's perceptual winning region $\win_2^2$, then P1 does not have a strategy to ensure $\Always \neg \varphi_2$ (reads ``always $\varphi_2$ is false'') and can only satisfy his specification by enforcing P2 to visit a decoy. The following Lemma formalizes this statement.

\begin{lemma}  
    \label{lma:P1-Spec-Is-Cosafe}
%     % 
     For any state $(s,q)\in \win_2^2 $, any strategy $\pi_1$ of P1 that satisfies $\neg \varphi_2\weakuntil \decoy$ also satisfies $\neg \varphi_2\until \decoy$. 
 \end{lemma}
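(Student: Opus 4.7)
The plan is to use the standard temporal-logic equivalence
\[
\neg \varphi_2 \weakuntil \decoy \;\equiv\; (\neg \varphi_2 \until \decoy) \;\lor\; \Always \neg \varphi_2,
\]
and to show that on $\win_2^2$ the right-hand disjunct cannot be realized against P2's subjectively rational play, leaving the left-hand disjunct as the only possibility.

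I would execute this in two short steps. First, because Assumption~\ref{assume:information-structure} places $\decoy$ outside $L_2$ and hence outside P2's formula $\varphi_2$, the truth value of $\varphi_2$ along any run of $\hgame_D$ depends only on labels that $L$ and $L_2$ share; so $\varphi_2$ holds on a run of $\hgame_D$ iff the run reaches $F_2 = S \times Q_F$. Since $\hgame_D$ and $\game_2$ share state space, actions, and transition function, a memoryless P2 sure-winning strategy $\pi_2^\star$ produced by Algorithm~\ref{alg:sure-win} on $\game_2$ remains sure-winning for the reachability objective $F_2$ when replayed inside $\hgame_D$. In the second step, I fix $(s,q) \in \win_2^2$ and any P1 strategy $\pi_1$ that satisfies $\neg \varphi_2 \weakuntil \decoy$ against P2's subjectively rational response. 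Playing $\pi_1$ against $\pi_2^\star$, the unique run from $(s,q)$ reaches $F_2$ in finitely many steps, so some suffix satisfies $\varphi_2$, and therefore $\Always \neg \varphi_2$ fails on that run. Combined with the hypothesis, the only remaining disjunct in the decomposition is $\neg \varphi_2 \until \decoy$, which yields the claim.

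The main obstacle, and the place the proof should be written carefully, is pinning down the scope of the phrase ``$\pi_1$ satisfies $\neg \varphi_2 \weakuntil \decoy$.'' Taken over \emph{all} P2 strategies, including ones that cooperatively stay away from $F_2$, the lemma would fail: an $\Always \neg \varphi_2$-trace is compatible with weak-until without ever touching a decoy state. The intended scope is against P2's subjectively rationalizable responses, i.e., the sure-winning strategies of P2 in $\game_2$ from $\win_2^2$. With that scope the argument above goes through directly, and it is worth either making the quantification explicit in the statement or pointing the reader to the hypergame solution concept of \cite{kulkarni_deceptive_2021} so that the class of P2 strategies being ranged over is unambiguous.
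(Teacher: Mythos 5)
Your argument is essentially the paper's own (omitted) proof: the paper says only that the lemma ``follows from the definition of weak until and the property of winning region,'' which is precisely your decomposition $\neg \varphi_2 \weakuntil \decoy \equiv (\neg \varphi_2 \until \decoy) \lor \Always \neg \varphi_2$ combined with the observation that from any state in $\win_2^2$ P2 can force a visit to $F_2$, so the $\Always \neg \varphi_2$ disjunct cannot be realized and the strong-until disjunct must hold. Your closing caveat---that ``$\pi_1$ satisfies the formula'' must be scoped to P2's subjectively rationalizable (winning) responses rather than arbitrary P2 strategies, since a cooperative P2 who avoids $F_2$ forever would validate weak until without any decoy being reached---identifies a real implicit precondition in the paper's statement, and making it explicit strengthens rather than departs from the intended argument.
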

 We omit the proof noting that it follows from the definition of weak until and the property of winning region.

Thus, when we focus our attention on the region $\win_2^2$, P1's objective is equivalently $\neg \varphi_2\until \decoy$. Before addressing the decoy allocation problem, we must answer: From which states in $\win_2^2$, P1 can ensure to satisfy $\neg \varphi_2\until \decoy$ by exploiting P2's lack of knowledge about the decoy states, \ie, $F_D$?

To answer this question, we formulate a deceptive game for P1. We first   restrict P1's actions to those considered rational for P2 in her perceptual game. At the same time, P2's irrational actions are removed as P1 knows a rational P2 will not use these actions.
As the rational actions are based on P2's subjective view of the game, we formalize this notion of rationality using the concept of \emph{subjective rationalizability} from game theory (we refer the interested readers to \cite{kulkarni_deceptive_2021} for rigorous treatment).

\begin{definition}[Subjectively Rationalizable Actions in $\game_2$]
    \label{def:subjectively-rationalizable-actions}
    Given P2's perceptual product game   $\game_2 = \langle \calS, \act, \Delta, F_2 \rangle$, a player $i$'s action $a \in \act_i$ is said to be \emph{subjectively rationalizable} at his/her winning state $(s, q) \in \win_i^2$ in $\game_2$ if and only if $\Delta((s, q), a) \in \win_i^2$. At player $i$'s losing state $(s, q) \notin \win_i^2$, any action of player $i$ is assumed to be subjectively rationalizable for player $i$. 
\end{definition}

Based on Def.~\ref{def:subjectively-rationalizable-actions}, we define the set of subjectively rationalizable actions of player $i$ at a state $(s, q) \in S \times Q$ as follows:
\begin{align}
    \srAct_i^{2}(s, q) = & \{a \in \act_i \mid (s, q) \in \win_i^2 \text{ and } \Delta((s, q), a) \in \win_i^2\} ~\cup \nonumber \\
    & \{a \in \act_i \mid (s, q) \notin \win_i^2 \text{ and } \Delta((s, q), a)  \text{ is defined} \}
\end{align}

\begin{assumption}
    \label{assume:P2change-information-action}  
    Subjective rationalizability is a common knowledge between P1 and P2. 
\end{assumption}

Assumption~\ref{assume:P2change-information-action} means that both players know that their opponent is subjectively rational  and that the opponent is aware of this fact. Thus,  P2 would become aware of her misperception in the game arena, when P1 uses an action which is not subjectively rationalizable in P2's perceptual game, $\game_2$. We can refine the hypergame on graph $\hgame_D$ to eliminate: 1) states that do not require decoys: This is the set $\win_1^2$ from which P1 has a sure-winning strategy for $\neg \varphi_2$; 2) actions that contradict P2's perception. After this elimination, we obtain a deceptive reachability game for P1, for synthesizing P1's deceptive strategy.

\begin{definition}[P1's deceptive reachability game]
    \label{def:hgame-labeling-misperception}
    Given the hypergame on graph $\hgame_D = \langle S \times Q, \act, \Delta, F_D, F_2 \rangle$, P1's deceptive reachability game is 
    \[
        \widehat\hgame_D  = \langle \win_2^2, \act, \widehat \Delta, F_D \rangle, 
    \]
    where 
    \begin{itemize}
        \item $\win_2^2$ is a set of P2's perceptual winning states, and game state space for P1's deceptive reachability game.
        
        \item $\widehat \Delta: S\times Q \times \act  \rightarrow S\times Q$ is a \emph{deterministic} transition function such that 
        \begin{itemize}
            \item if $(s,q) \notin F_2$ then $\widehat \Delta((s, q ), a) = \Delta((s,q),a)$ whenever $s \in S_i$ and $a \in \srAct_i^2(s,q)$ for $i = 1, 2$. Otherwise, $\widehat \Delta((s, q ), a)$ is undefined.
            
            \item if $(s,q)\in F_2$, then for any action $a\in \act$, $\widehat \Delta((s, q ), a) = (s,q)$. That is, the set $F_2$ are modified into sink states. 
        \end{itemize}
        
        \item $F_D$ is the  set  of  states that P1  aims to  reach.
    \end{itemize}
    
\end{definition}

\begin{lemma}
For a given state $(s,q)$, if P1 has a sure-winning strategy in $\widehat \hgame_D$ starting from $(s,q)$, then P1 can ensure to satisfy $\neg \varphi_2\until \decoy$ by following this sure-winning strategy in $\widehat \hgame_D$.
\end{lemma}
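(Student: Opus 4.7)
The plan is to argue that a sure-winning strategy $\pi_1$ for the reachability target $F_D$ in $\widehat\hgame_D$ remains executable against P2 in the real game, and that every resulting trajectory witnesses the cosafety property $\neg\varphi_2 \until \decoy$.

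First, I would fix such a $\pi_1$ at $(s,q)$ and show that, in the real underlying game, every play consistent with P1 following $\pi_1$ and P2 acting subjectively rationalizably is also a play of $\widehat\hgame_D$. At each state $(s',q') \in \win_2^2$ where it is P2's turn, Assumption~\ref{assume:P2change-information-action} restricts her to actions in $\srAct_2^2(s',q')$, which are precisely the transitions retained by $\widehat\Delta$. At each $\win_2^2$-state where it is P1's turn, every action of P1 belongs to $\srAct_1^2$ and, by the standard characterization of P2's reachability-winning region at P1-vertices, leads back into $\win_2^2$; hence those transitions also live in $\widehat\hgame_D$. Since $\pi_1$ is sure-winning for reachability of $F_D$ in $\widehat\hgame_D$, the resulting trajectory reaches some $(s'',q'') \in F_D$ in finitely many steps, and by definition $\decoy \in L(s'')$ holds in the real arena.

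Second, I would rule out satisfaction of $\varphi_2$ at any strictly earlier step. By Def.~\ref{def:hgame-labeling-misperception}, every state in $F_2$ is turned into a sink in $\widehat\hgame_D$. If the sure-winning play passed through some $(s',q') \in F_2 \setminus F_D$ before reaching $F_D$, it would be trapped there forever, contradicting the reachability guarantee of $\pi_1$. Thus the DFA component of the trajectory does not enter $Q_F$ strictly before the first visit to $F_D$.

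The subtle point, which I expect to be the main obstacle, is aligning the DFA state tracked inside $\widehat\hgame_D$ with the DFA state that governs real satisfaction of $\varphi_2$. The hypergame updates the DFA via $L_2$, while $\varphi_2$ on the real run is evaluated through $L$. Since $L$ and $L_2$ agree on every proposition except $\decoy$, and $\varphi_2$, being P2's specification, contains no occurrence of $\decoy$, the $\delta_2$-transitions induced by $L$ and $L_2$ coincide on every finite prefix. Combining this with the previous two paragraphs, the real run has $\neg \varphi_2$ holding at every step strictly before the first decoy visit and $\decoy$ holding at that step, which is exactly the semantics of $\neg \varphi_2 \until \decoy$.
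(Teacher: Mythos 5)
Your proof is correct, and its core step---that because $F_2$ is turned into sink states in $\widehat\hgame_D$, any play that reaches $F_D$ cannot have visited $F_2$ beforehand, so the reachability sure-winning strategy already enforces $\neg\varphi_2 \until \decoy$---is exactly the paper's own (one-observation) proof. Your two additional verifications, that subjectively rationalizable play remains a play of $\widehat\hgame_D$ and that the $L$ versus $L_2$ discrepancy in the \ac{dfa} updates is harmless because $\varphi_2$ does not mention $\decoy$, are legitimate details the paper leaves implicit rather than a different route.
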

\begin{proof}
A path satisfies $\neg\varphi_2\until \decoy$ if it reaches $F_D$ and before reaching $F_D$, it does not visit any state in $F_2$. 
By construction of $\widehat \hgame_D$, if  any path reaches $F_D$, it must not have visited $F_2$ because if $F_2$ is reached prior to $F_D$, then the game stays in the sink state and will never reach $F_D$. Thus, P1's sure-winning strategy that ensures a path to reach $F_D$ alone  satisfies $\neg \varphi_2\until \decoy$.
\end{proof}

Formally, P1's sure-winning strategy $\pi_1$ in the deceptive reachability game is said to be \emph{deceptively sure winning}. 
A state from which P1 has a deceptive sure-winning strategy is called a \emph{deceptively sure-winning state}. The set of all deceptively sure-winning states of P1 in $\widehat \hgame_D$ is called P1's \emph{deceptive sure-winning region}. The deceptive sure-winning region for P1 can be computed by using Alg.~\ref{alg:sure-win} with $\widehat\hgame_D$ by letting $V_1\coloneqq (S_1\times Q) \cap \win_2^2$, 
$V_2\coloneqq (S_2\times Q) \cap \win_2^2$, 
$\Delta \coloneqq \widehat\Delta$, and $F\coloneqq F_D$ (see the description of terms in Alg.~\ref{alg:sure-win}). We denote the deceptive sure-winning region for P1 as $\dswin_D$.

It is noted that the deception is induced by the set $F_D$ which is hidden from P2, and the fact that during the interaction, P1 does not choose any action that contradicts P2's misperception.  Additionally, we note that deceptive sure-winning region is not defined for P2, as she is unaware of her lack of information until a decoy is reached.

We now continue with the running example to illustrate the hypergame and P1's deceptive reachability game.
\setcounter{example}{1}
\begin{example}[Part 3]
    \label{ex:running-example-3}
    From Def.~\ref{def:hgame}, we note that the hypergame on graph $\hgame_D$ shares the same underlying graph as P2's perceptual game, $\game_2$. That is, in our example, $\hgame_D$ would have the same graph as Fig.~\ref{fig:perceptual-games-p2} but has the states $(h, 0), (k, 1)$ and $(k, 2)$ labeled as the sink states (shown in red). Now, let us understand the construction of $\widehat\hgame_D$ from $\hgame_D$. We start by computing $\win_2^2$ using Alg.~\ref{alg:sure-win} over the model $\game_2$ by letting $V_1 \coloneqq S_2 \times Q, V_2 \coloneqq S_1 \times Q, \Delta \coloneqq \Delta$ and $F \coloneqq F_2$. This results in $\win_2^2$ to include all states except $(a, 0), (b, 0)$. 
    Intuitively, at the state $(b,0)$, P1 can always choose the transition $b\rightarrow a$ to reach $(a,0)$ and keep the game state within $\{(a,0),(b,0)\}$. 
    Consequently, any action that leads to $(a, 0), (b, 0)$ is \textbf{not} subjectively rationalizable for P2 and thereby removed. Additionally, the states $(a, 0)$ and $(b, 0)$ are also removed from $\hgame_D$ to get $\widehat\hgame_D$, which is shown in Fig.~\ref{fig:widehat-hgame}.

    \begin{figure}
        \centering
        \begin{tikzpicture}[->,>=stealth',shorten >=1pt,auto,node distance=2.5cm, scale = 0.5,transform shape]

    %   \node[state,rectangle] (a) [] {\Large $(a, 0)$};
    %   \node[state] (b) [right of=a] {\Large $(b, 0)$};
      \node[state,rectangle] (c) [right of=b] {\Large $(c, 0)$};
      \node[state,rectangle] (d) [right of=c] {\Large $(d, 0)$};
      \node[state,rectangle] (e) [right of=d] {\Large $(e, 0)$};
      \node[state] (f) [below of=b] {\Large $(f, 1)$};
      \node[state] (g) [below of=c] {\Large $(g, 2)$};
      \node[state,fill=red!40] (h) [below of=d] {\Large $(h, 0)$};
      \node[state] (i) [below of=e] {\Large $(i, 0)$};
      \node[state,rectangle] (j) [below left=2.525cm and 2.25cm of f] {\Large $(j, 1)$};
      \node[state,rectangle,fill=red!40] (k1) [below of=f] {\Large $(k, 1)$};
      \node[state,rectangle,fill=red!40] (k2) [below of=g] {\Large $(k, 2)$};
      \node[state,rectangle] (l) [below of=h] {\Large $(l, 0)$};
      \node[state,rectangle] (m) [below of=i] {\Large $(m, 0)$};
      \node[state,accepting] (n) [below of=k2] {\Large $(n, 3)$};
      \node[state,accepting] (o) [below of=l] {\Large $(o, 3)$};
    
      \path 
            % (a) edge[bend left]   node {} (b)
            % (b) edge[bend left]   node {} (a)
            % (c) edge              node {} (b)
            (c) edge              node {} (f)
            (c) edge              node {} (g)
            (c) edge              node {} (h)
            (d) edge              node {} (g)
            (d) edge              node {} (h)
            (e) edge              node {} (i)
            (f) edge              node {} (j)
            (f) edge              node {} (k1)
            (g) edge              node {} (k2)
            (h) edge              node {} (l)
            (i) edge              node {} (m)
            (j) edge              node {} (n)
            (k1) edge              node {} (n)
            (k2) edge              node {} (o)
            (l) edge              node {} (n)
            (m) edge              node {} (o)
            (n) edge [loop below]              node {} (n)
            (o) edge [loop below]              node {} (o)
            ;
    \end{tikzpicture}
        \caption{P1's deceptive reachability game.}
        \label{fig:widehat-hgame}
    \end{figure}
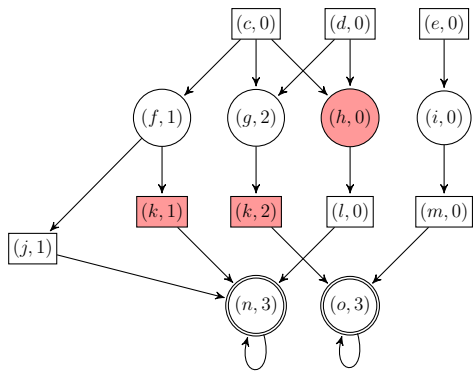
\end{example}

% -----------------------------------------------
% SUB-SECTION   
% -----------------------------------------------
\subsection{Compositional Synthesis for Decoy Allocation}
    \label{subsect:compsynth}

Given a subset $\calD\subseteq S$ of states that can be allocated as decoys, for every different choice of decoy allocation $D\subseteq \calD$ we have a different hypergame, $\widehat\hgame_D$. In this context, solving Problem~\ref{prob:decoy-placement} is equivalent to identifying one hypergame that has the largest  deceptive sure-winning region $\abs{\dswin_D}$ for P1. A na\"ive approach to solve this problem would be to compute $\dswin_D$ for each $D \subseteq \calD$ and then select a set $D$ for which $\abs{\dswin_D}$ is the largest. However, this approach is not scalable because the number of subsets increases combinatorially with the size of game. To address this issue, we introduce a compositional approach to decoy allocation in which we show that when certain conditions hold, the decoy allocation problem can be formulated as a sub or supermodular optimization problem. We propose an algorithm to approximate the optimal decoy allocation.

\begin{proposition}
    \label{prop:or-composition}
    Let $\dswin_{\{s_1\}}$ and $\dswin_{\{s_2\}}  $ be P1's deceptive sure-winning regions in the hypergames $\widehat\hgame_{\{s_1\}}$ and $\widehat\hgame_{\{s_2\}}$ respectively. Then, P1's deceptive sure-winning region $\dswin_{\{s_1, s_2\}}$ in the reachability game $\widehat\hgame_{\{s_1, s_2\}}$ is equal to the sure-winning region for P1 in the following zero-sum, reachability game:
    \[\widehat \hgame_{\{s_1, s_2\}} = \langle \win_2^2, \act, \widehat\Delta, \dswin_{\{s_1\}} \cup \dswin_{\{s_2\}} \rangle,\] 
    where P1's goal is to reach the target set $ \dswin_{\{s_1\}} \cup \dswin_{\{s_2\}}$
and P2's goal is to prevent P1 from reaching the target set.   

\end{proposition}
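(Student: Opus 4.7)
\medskip

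\noindent\textbf{Proof plan.} The plan is to exploit the fact that the three reachability games $\widehat\hgame_{\{s_1\}}$, $\widehat\hgame_{\{s_2\}}$, and $\widehat\hgame_{\{s_1,s_2\}}$ share \emph{exactly the same state space $\win_2^2$, action set $\act$, and transition function $\widehat\Delta$}, and differ only in their target sets $F_{\{s_1\}}$, $F_{\{s_2\}}$, and $F_{\{s_1,s_2\}} = F_{\{s_1\}} \cup F_{\{s_2\}}$. This holds because both $\win_2^2$ and the subjectively rationalizable actions $\srAct_i^2$ are defined from P2's perceptual product game $\game_2$, whose construction uses $L^2$ and is therefore independent of which decoys are placed. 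Hence strategies computed in any one of these three games can be interpreted as strategies in the others, which is the key leverage for composition. Let $\win_1^\ast$ denote the sure-winning region for P1 in the reachability game stated in the proposition (target $\dswin_{\{s_1\}} \cup \dswin_{\{s_2\}}$). The goal is to show $\dswin_{\{s_1,s_2\}} = \win_1^\ast$ by two inclusions.

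\medskip

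\noindent\textbf{Inclusion $\dswin_{\{s_1,s_2\}} \subseteq \win_1^\ast$.} This direction is by monotonicity of sure-winning regions under target enlargement. First I would note the trivial fact that $F_{\{s_i\}} \subseteq \dswin_{\{s_i\}}$ for $i = 1, 2$, since any target state is sure-winning in its own reachability game. Therefore $F_{\{s_1,s_2\}} = F_{\{s_1\}} \cup F_{\{s_2\}} \subseteq \dswin_{\{s_1\}} \cup \dswin_{\{s_2\}}$. Since the two games share the same arena $\langle \win_2^2, \act, \widehat\Delta\rangle$, any deceptive sure-winning strategy for P1 in $\widehat\hgame_{\{s_1,s_2\}}$ reaches $F_{\{s_1,s_2\}}$, and therefore a fortiori reaches the larger target $\dswin_{\{s_1\}} \cup \dswin_{\{s_2\}}$, so it witnesses membership in $\win_1^\ast$.

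\medskip

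\noindent\textbf{Inclusion $\win_1^\ast \subseteq \dswin_{\{s_1,s_2\}}$.} This is a strategy-composition argument. Fix $v \in \win_1^\ast$ and let $\sigma$ be a memoryless sure-winning strategy in the game of the proposition, driving the play from $v$ into $\dswin_{\{s_1\}} \cup \dswin_{\{s_2\}}$ in finitely many steps against any P2. At the first moment the play enters some $u \in \dswin_{\{s_i\}}$ (with $i$ determined by $u$), switch to the deceptive sure-winning strategy $\sigma_i$ of P1 in $\widehat\hgame_{\{s_i\}}$, which drives the play from $u$ into $F_{\{s_i\}} \subseteq F_{\{s_1,s_2\}}$ in finitely many steps. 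The composite (finite-memory) strategy thus reaches $F_{\{s_1,s_2\}}$ from $v$, proving $v \in \dswin_{\{s_1,s_2\}}$. I would then remark that the standard positional determinacy of finite reachability games collapses this finite-memory witness back to a memoryless one, so this does not enlarge the strategy class.

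\medskip

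\noindent\textbf{Main obstacle.} The nontrivial step is justifying that the composition is safe with respect to the sink set $F_2$. During the initial phase, $\sigma$ is a sure-winning strategy for reaching $\dswin_{\{s_1\}} \cup \dswin_{\{s_2\}}$; because states in $F_2$ are sinks in $\widehat\Delta$ and do not lie in either $\dswin_{\{s_1\}}$ or $\dswin_{\{s_2\}}$ (they cannot reach any $F_{\{s_i\}}$), the run under $\sigma$ cannot visit $F_2$ before hitting the handoff point $u$. During the second phase, $\sigma_i$ is itself sure-winning for $F_{\{s_i\}}$, so it likewise avoids $F_2$ until the target is reached. The only subtlety is the handoff: because both phases are memoryless in the shared product arena, no information is lost at the switch, and P2's action choices during phase one do not constrain the continuation beyond delivering the play to some $u \in \dswin_{\{s_1\}} \cup \dswin_{\{s_2\}}$, from which $\sigma_i$ takes over unconditionally. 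Once this observation is spelled out, both inclusions combine to yield the claimed equality.
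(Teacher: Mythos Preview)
Your argument is correct and, for the inclusion $\win_1^\ast \subseteq \dswin_{\{s_1,s_2\}}$, it is exactly the paper's: drive the play from $\win_1^\ast$ into $\dswin_{\{s_1\}}\cup\dswin_{\{s_2\}}$ using the sure-winning strategy there, then hand off to the appropriate $\sigma_i$ to reach $F_{\{s_i\}}\subseteq F_{\{s_1,s_2\}}$. The paper in fact proves only this one direction; you go further and supply the reverse inclusion $\dswin_{\{s_1,s_2\}} \subseteq \win_1^\ast$ via the monotonicity observation $F_{\{s_1,s_2\}} = F_{\{s_1\}}\cup F_{\{s_2\}} \subseteq \dswin_{\{s_1\}}\cup\dswin_{\{s_2\}}$, which the paper leaves implicit. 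Your ``main obstacle'' paragraph, while not wrong, is unnecessary: the $F_2$-sink modification is already baked into $\widehat\Delta$ (Def.~\ref{def:hgame-labeling-misperception}), so reachability of any target in this common arena already encodes ``reach the target without first visiting $F_2$,'' and the composition is safe by construction with no separate argument about $F_2$ required.
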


\begin{proof}
First, it is noted that  all the three deceptive reachability games: $\widehat\hgame_{\{s_1\}}$,  $\widehat\hgame_{\{s_2\}}$ and $\widehat\hgame_{\{s_1, s_2\}}$, share the same underlying graphs but different reachability objectives for P1: $F_{\{s_1\}}, F_{\{s_2\}}$, and $F_{\{s_1,s_2\}}$. In addition, $F_{\{s_1\}}\cup F_{\{s_2\}} = F_{\{s_1,s_2\}}$.
  By definition of sure-winning regions, from every state $(s, q) \in \dswin_{\{s_i\}}$ for $i = 1, 2$, there exists a deceptive sure-winning strategy $\pi_{\{s_i\}}^\ast$ for P1 to ensure  $F_{\{s_i\}}$ is reached in finitely many steps, for any subjectively rationalizable counter-strategy of P2.
  
In $\widehat \hgame_{\{s_1, s_2\}}$, let $W^\ast \subseteq \win_2^2$ be the sure-winning region for P1  and $\pi^\ast$ be the sure-winning strategy of P1.
    From a state $(s,q)$ in $W^\ast$, P1 can ensure to reach a state, say $(s',q')\in \dswin_{\{s_1\}}\cup \dswin_{\{s_2\}}$ by following $\pi^\ast$. Upon reaching a state $(s',q')$, P1 can ensure to reach a state in either $F_{\{s_1\}}$ or $F_{\{s_2\}}$---that is, P1 can ensure to reach a state in $F_{\{s_1,s_2\}} $.
    Hence, a sure-winning  state $(s,q)$ in the above reachability game is deceptive sure-winning
 in $\widehat \hgame_{\{s_1,s_2\}}$ in which $F_{\{s_1,s_2\}}$ is P1's reachability objective. The deceptive sure-winning strategy
  is sequentially composed of strategies $\pi^\ast$, $\pi_{\{s_1\}}^\ast$, and $\pi_{\{s_1\}}^\ast$ as follows: From a state $(s,q)\in W^\ast$, P1 uses $\pi^\ast$ until a state in $\dswin_{\{s_1\}}\cup \dswin_{\{s_2\}}$ is reached. If $\dswin_{\{s_1\}} \setminus \dswin_{\{s_2\}}$ is reached, P1 uses the  sure-winning strategy $\pi^\ast_{\{s_1\}}$; If $\dswin_{\{s_2\}}\setminus \dswin_{\{s_1\}}$ is reached, P1 uses the sure-winning stratgy $\pi^\ast_{\{s_2\}}$; if $\dswin_{\{s_1\}} \cap \dswin_{\{s_2\}}$, P1 selects one of $\pi^\ast_{\{s_1\}}$ and $\pi^\ast_{\{s_2\}}$ arbitrarily.
\end{proof}

Prop.~\ref{prop:or-composition} provides us a way for composing the deceptive sure-winning regions of two deceptive reachability games   $\widehat\hgame_{s_1}$ and $\widehat\hgame_{s_2}$ to compute the deceptive sure-winning region in the deceptive reachability game $\widehat\hgame_{\{s_1, s_2\}}$ where both $s_1$ and $s_2$ are allocated as decoys. A more general result can be obtained by applying Prop.~\ref{prop:or-composition} repeatedly.

\begin{corollary}
    \label{cor:or-composition}
    Given $\dswin_{D} $  and $\dswin_{\{s\}}$ as P1's deceptive sure-winning regions in hypergames $\widehat\hgame_{D}$ and $\widehat\hgame_{\{s\}}$ respectively,
    P1's deceptive sure-winning region $\dswin_{D \cup \{s\}}$
     in the deceptive reachability game $\widehat\hgame_{D \cup \{s\}}$   
    equals the sure-winning region for P1 in the following zero-sum, reachability game:
    % \begin{equation}
    %     \label{eq:reachability_game1}
    \[
    \langle \win_2^2, \act, \widehat\Delta, \dswin_{D} \cup \dswin_{\{s\}} \rangle
    \] %\end{equation}
    where P1's goal is to reach the target set $ \dswin_{D} \cup \dswin_{\{s\}}$
and P2's goal is to prevent P1 from reaching the target set.   
\end{corollary}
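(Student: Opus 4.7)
The plan is to mirror the proof of Proposition~\ref{prop:or-composition} almost verbatim, with $\{s_1\}$ replaced by $D$ and $\{s_2\}$ replaced by $\{s\}$. The statement also follows by a straightforward induction on $|D|$ with Prop.~\ref{prop:or-composition} as the base case, since the argument there treats the two deceptive sure-winning regions symmetrically.

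First I would establish that $\widehat\hgame_D$, $\widehat\hgame_{\{s\}}$, and $\widehat\hgame_{D \cup \{s\}}$ share a common underlying arena, differing only in their reachability targets. The state space $\win_2^2$ is computed from P2's perceptual game $\game_2$, whose labeling function $L^2$ by construction ignores $\decoy$; hence $\win_2^2$ is invariant under changes to the decoy allocation. Likewise, $\widehat\Delta$ is defined via the subjectively rationalizable actions $\srAct_i^2$, which again are derived from $\game_2$ and therefore independent of $D$. The only thing that varies with the decoy set is the target, and crucially $F_{D \cup \{s\}} = F_D \cup F_{\{s\}}$, because $F_D = \{(s',q) \mid \decoy \in L(s')\}$ is obtained by taking the pointwise union of decoy sets.

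Next I would prove two set inclusions. For $(\supseteq)$, suppose $(s_0,q_0)$ is sure-winning in the reachability game with target $\dswin_D \cup \dswin_{\{s\}}$, with sure-winning strategy $\pi^\ast$. I would compose strategies sequentially: P1 plays $\pi^\ast$ until a state $(s',q') \in \dswin_D \cup \dswin_{\{s\}}$ is reached, then switches to $\pi_D^\ast$ if $(s',q') \in \dswin_D$, to $\pi_{\{s\}}^\ast$ if $(s',q') \in \dswin_{\{s\}} \setminus \dswin_D$, and chooses either one arbitrarily on the intersection. In every case, P1 eventually reaches $F_D$ or $F_{\{s\}}$, both contained in $F_{D \cup \{s\}}$, so $(s_0,q_0) \in \dswin_{D \cup \{s\}}$. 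For $(\subseteq)$, note that $F_D \subseteq \dswin_D$ and $F_{\{s\}} \subseteq \dswin_{\{s\}}$, since a reachability target is trivially contained in its own sure-winning region (\cf~line~1 of Alg.~\ref{alg:sure-win}). Hence any strategy that reaches $F_{D \cup \{s\}} = F_D \cup F_{\{s\}}$ necessarily reaches $\dswin_D \cup \dswin_{\{s\}}$, which exhibits $(s_0,q_0)$ as sure-winning in the composed reachability game.

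No genuinely difficult step arises; the argument is essentially a soundness check for sequential composition of deceptive sure-winning strategies. The one subtlety I would be careful about is verifying that when one enlarges the decoy set from $D$ to $D \cup \{s\}$, P2's subjectively rationalizable responses do not change, so that $\pi_D^\ast$ and $\pi_{\{s\}}^\ast$ remain valid responses against every rational P2 strategy in the combined hypergame. This is immediate from the observation above that $\srAct_i^2$ depends only on $\game_2$, which is independent of the decoy allocation.
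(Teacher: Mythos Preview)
Your proposal is correct and aligns with the paper's approach: the paper simply remarks that Corollary~\ref{cor:or-composition} is obtained by applying Prop.~\ref{prop:or-composition} repeatedly, which is exactly the inductive route you mention, and your direct argument just re-runs the proof of Prop.~\ref{prop:or-composition} with $D$ in place of $\{s_1\}$. Your explicit treatment of the $(\subseteq)$ direction and of the decoy-independence of $\win_2^2$ and $\widehat\Delta$ is in fact slightly more than the paper spells out even for Prop.~\ref{prop:or-composition}.
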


\begin{corollary}
    \label{cor:subset}
    Given a set $D \subseteq \calD$ and a state $s \in \calD$, we have 
    \[
        \dswin_{D} \cup \dswin_{\{s\}} \subseteq \dswin_{D \cup \{s\}}
    \]
\end{corollary}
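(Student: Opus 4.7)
The plan is to derive this corollary almost immediately from Corollary~\ref{cor:or-composition}, reducing it to the elementary fact that in any reachability game, the target set is always contained in the sure-winning region.

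First, I would invoke Corollary~\ref{cor:or-composition} to rewrite $\dswin_{D \cup \{s\}}$ as the sure-winning region of P1 in the zero-sum reachability game
\[
\langle \win_2^2, \act, \widehat\Delta, \dswin_{D} \cup \dswin_{\{s\}} \rangle,
\]
in which the target is exactly the set $\dswin_{D} \cup \dswin_{\{s\}}$. To establish the claimed inclusion, it therefore suffices to show that every state in this target set is sure-winning for P1 in this reachability game.

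Next, I would appeal to Algorithm~\ref{alg:sure-win}: the fixed-point computation initializes $Z_0 \gets F$ and the sure-winning region $\win_1$ is obtained as the limit $Z_k$ of a monotone increasing sequence $Z_0 \subseteq Z_1 \subseteq \cdots$. Consequently $F \subseteq \win_1$ in every reachability game. Instantiating this with $F = \dswin_{D} \cup \dswin_{\{s\}}$ yields $\dswin_{D} \cup \dswin_{\{s\}} \subseteq \dswin_{D \cup \{s\}}$. Intuitively, any state already lying in the target is trivially winning since the reachability objective is satisfied in zero steps. A small sanity check is required to make sure the game is even defined at these states, namely that $\dswin_{D} \cup \dswin_{\{s\}} \subseteq \win_2^2$; this holds because, by construction in Def.~\ref{def:hgame-labeling-misperception}, the deceptive reachability games $\widehat\hgame_D$ and $\widehat\hgame_{\{s\}}$ both live on the state space $\win_2^2$, so their deceptive sure-winning regions are subsets of $\win_2^2$ as well.

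There is no substantial obstacle: the result is a direct monotonicity property of the map $D \mapsto \dswin_D$, which is precisely what the paper uses in the subsequent discussion to frame the decoy allocation task as a sub-/supermodular optimization problem. If anything, the only bookkeeping step is ensuring the target set lies within the game's state space before applying the reachability characterization, but this is immediate from the definitions.
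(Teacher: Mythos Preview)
Your proposal is correct and follows essentially the same approach as the paper, which simply states that the corollary ``follows immediately from Proposition~\ref{prop:or-composition} and Alg.~\ref{alg:sure-win}.'' You have just spelled out the details the paper leaves implicit: invoking Corollary~\ref{cor:or-composition} (the general form of Proposition~\ref{prop:or-composition}) to identify $\dswin_{D\cup\{s\}}$ as a sure-winning region with target $\dswin_D\cup\dswin_{\{s\}}$, and then noting via Alg.~\ref{alg:sure-win} that the target set is always contained in the resulting winning region.
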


Corollary~\ref{cor:subset} follows immediately from Proposition~\ref{prop:or-composition} and  Alg.~\ref{alg:sure-win}. To see this, consider a P1 state $s \in \calD$ which is neither in $\dswin_D$ nor in $\dswin_{\{s\}}$ but has exactly two transitions: one leading to $s$ and another leading to a state in $\dswin_D$. Clearly, the new state will be added to $\dswin_{D \cup \{s\}}$. Thus, if we consider the size of $\dswin_D$ to be a measure of effectiveness of allocating the states in $D \subseteq \calD$ as decoys, then Corollary~\ref{cor:subset} states that the effectiveness of adding a new state to a set of decoys is greater than or equal to the sum of their individual effectiveness.

\setcounter{example}{1}
\begin{example}[Part 4]
    \label{ex:running-example-4}
    Given the underlying graph of P1's reachability game $\widehat\hgame_D$ from Fig.~\ref{fig:widehat-hgame}, let us observe the effect of choosing different $D$ on P1's deceptive sure-winning region, $\dswin_D$. Letting $k = 2$,  Fig.~\ref{fig:decoy-placement-choice} shows the $\dswin_D$ for $D = \{h, k\}$ (Fig.~\ref{fig:decoy-placement-choice-1}) and $D = \{l, m\}$ (Fig.~\ref{fig:decoy-placement-choice-2}). In the figure, the colored states represent P1's deceptive sure-winning region, $\dswin_D$. The states in $F_D$ are colored red and the states from which P1 has deceptive sure-winning strategy to reach a state in $F_D$ are colored blue. For instance, for $D = \{h, k\}$, a P1 state $(f, 1)$ is included in $F_{\{h, k\}}$ because there exists an action for P1 that leads to $(k, 1)$, which is in $F_{\{h, k\}}$. Similarly, a P2 state $(d, 0)$ is included in $\dswin_{\{h, k\}}$ because both the outgoing transitions from $(d, 0)$ lead to a deceptively sure-winning state. We also notice that the states $(c, 0)$ and $(d, 0)$ from $\dswin_{\{h,k\}}$ are \emph{not} included in either $\dswin_{\{h\}} = \{(h, 0)\}$ or $\dswin_{\{k\}} = \{(k, 1), (k, 2), (f, 1), (g, 2)\}$ because both the states have at least one transition that does not lead to deceptive sure-winning state. For instance, the transition $(d, 0) \rightarrow (g, 2)$ prevents the state $(d, 0)$ to be added to $\dswin_{h}$.

    \begin{figure}[htp]
    \centering
    \subfloat[\centering Deceptive sure-winning region of P1 when $D = \{h, k\}$\label{fig:decoy-placement-choice-1}]{%
      \begin{tikzpicture}[->,>=stealth',shorten >=1pt,auto,node distance=2.5cm, scale = 0.45,transform shape]

        %   \node[state,rectangle] (a) [] {\Large $(a, 0)$};
        %   \node[state] (b) [right of=a] {\Large $(b, 0)$};
          \node[state,rectangle,fill=blue!10] (c) [right of=b] {\Large $(c, 0)$};
          \node[state,rectangle,fill=blue!10] (d) [right of=c] {\Large $(d, 0)$};
          \node[state,rectangle] (e) [right of=d] {\Large $(e, 0)$};
          \node[state,fill=blue!10] (f) [below of=b] {\Large $(f, 1)$};
          \node[state,fill=blue!10] (g) [below of=c] {\Large $(g, 2)$};
          \node[state,fill=red!40] (h) [below of=d] {\Large $(h, 0)$};
          \node[state] (i) [below of=e] {\Large $(i, 0)$};
          \node[state,rectangle] (j) [below left=2.525cm and 2.25cm of f] {\Large $(j, 1)$};
          \node[state,rectangle,fill=red!40] (k1) [below of=f] {\Large $(k, 1)$};
          \node[state,rectangle,fill=red!40] (k2) [below of=g] {\Large $(k, 2)$};
          \node[state,rectangle] (l) [below of=h] {\Large $(l, 0)$};
          \node[state,rectangle] (m) [below of=i] {\Large $(m, 0)$};
          \node[state,accepting] (n) [below of=k2] {\Large $(n, 3)$};
          \node[state,accepting] (o) [below of=l] {\Large $(o, 3)$};
        
          \path 
                % (a) edge[bend left]   node {} (b)
                % (b) edge[bend left]   node {} (a)
                % (c) edge              node {} (b)
                (c) edge              node {} (f)
                (c) edge              node {} (g)
                (c) edge              node {} (h)
                (d) edge              node {} (g)
                (d) edge              node {} (h)
                (e) edge              node {} (i)
                (f) edge              node {} (j)
                (f) edge              node {} (k1)
                (g) edge              node {} (k2)
                (h) edge              node {} (l)
                (i) edge              node {} (m)
                (j) edge              node {} (n)
                (k1) edge              node {} (n)
                (k2) edge              node {} (o)
                (l) edge              node {} (n)
                (m) edge              node {} (o)
                (n) edge [loop below]              node {} (n)
                (o) edge [loop below]              node {} (o);
        \end{tikzpicture}
    }\hfill
    \subfloat[\centering Deceptive sure-winning region of P1 when $D = \{l, m\}$\label{fig:decoy-placement-choice-2}]{%
      \begin{tikzpicture}[->,>=stealth',shorten >=1pt,auto,node distance=2.5cm, scale = 0.45,transform shape]

        %   \node[state,rectangle] (a) [] {\Large $(a, 0)$};
        %   \node[state] (b) [right of=a] {\Large $(b, 0)$};
          \node[state,rectangle] (c) [right of=b] {\Large $(c, 0)$};
          \node[state,rectangle] (d) [right of=c] {\Large $(d, 0)$};
          \node[state,rectangle,fill=blue!10] (e) [right of=d] {\Large $(e, 0)$};
          \node[state] (f) [below of=b] {\Large $(f, 1)$};
          \node[state] (g) [below of=c] {\Large $(g, 2)$};
          \node[state,fill=blue!10] (h) [below of=d] {\Large $(h, 0)$};
          \node[state,fill=blue!10] (i) [below of=e] {\Large $(i, 0)$};
          \node[state,rectangle] (j) [below left=2.525cm and 2.25cm of f] {\Large $(j, 1)$};
          \node[state,rectangle] (k1) [below of=f] {\Large $(k, 1)$};
          \node[state,rectangle] (k2) [below of=g] {\Large $(k, 2)$};
          \node[state,rectangle,fill=red!40] (l) [below of=h] {\Large $(l, 0)$};
          \node[state,rectangle,fill=red!40] (m) [below of=i] {\Large $(m, 0)$};
          \node[state,accepting] (n) [below of=k2] {\Large $(n, 3)$};
          \node[state,accepting] (o) [below of=l] {\Large $(o, 3)$};
        
          \path 
                % (a) edge[bend left]   node {} (b)
                % (b) edge[bend left]   node {} (a)
                % (c) edge              node {} (b)
                (c) edge              node {} (f)
                (c) edge              node {} (g)
                (c) edge              node {} (h)
                (d) edge              node {} (g)
                (d) edge              node {} (h)
                (e) edge              node {} (i)
                (f) edge              node {} (j)
                (f) edge              node {} (k1)
                (g) edge              node {} (k2)
                (h) edge              node {} (l)
                (i) edge              node {} (m)
                (j) edge              node {} (n)
                (k1) edge              node {} (n)
                (k2) edge              node {} (o)
                (l) edge              node {} (n)
                (m) edge              node {} (o)
                (n) edge [loop below]              node {} (n)
                (o) edge [loop below]              node {} (o);
        \end{tikzpicture}
    }
    
    \caption{Deceptive sure-winning region of P1 under different choice of $D$.}
    \label{fig:decoy-placement-choice}

\end{figure}
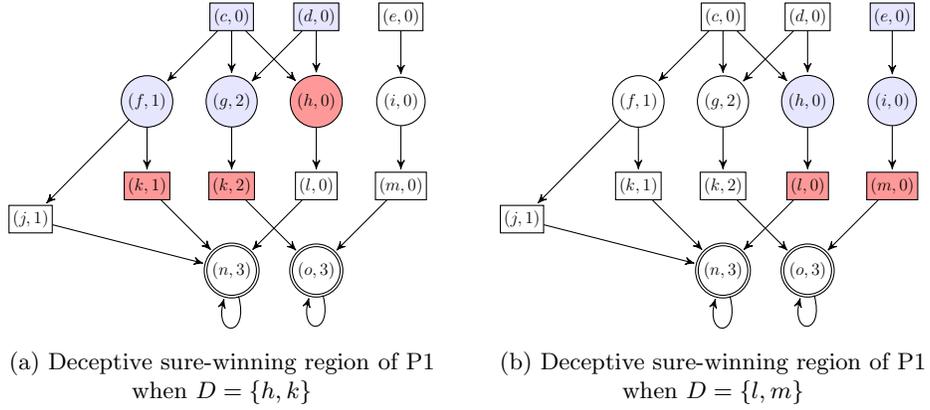
\end{example}

We now define a composition operator  $\biguplus$ over deceptive sure-winning regions which represent the true effect of adding a new state to a given set of decoys. That is, given $D \subseteq \calD$ and $s \in \calD$, let $\biguplus$ be an operator such that 
\[
    \dswin_{D}\biguplus \dswin_{\{s\}}= \dswin_{D\cup \{s\}}.
\] 
 That is,  the composition operator   returns the deceptive sure-winning region in the reachability game     $\langle \win_2^2, \act, \widehat\Delta, \dswin_{D} \cup \dswin_{\{s\}} \rangle$, which equals P1's deceptive sure-winning region   when the set $D \cup \{s\}$ are selected to be decoys.

With this notation, Problem~\ref{prob:decoy-placement} becomes equivalent to identifying a set $D^\ast \subseteq \calD$ such that 

\begin{align}
    \label{eq:composition_opt}
    D^\ast = \arg\max_{D \subseteq \calD} \left| \biguplus\limits_{s \in D} \dswin_{\{s\}} \right| \quad \mbox{subject to: } \card{D}\le k.
\end{align}

It is noted that if we replace the composition operator $\biguplus$ with the union operator $\cup$ in \eqref{eq:composition_opt}, then the problem becomes

\begin{align}
    \label{eq:2}
   \max_{D \subseteq \calD} \left| \bigcup\limits_{s \in D} \dswin_{\{s\}} \right| \quad \mbox{subject to: } \card{D}\le k.
\end{align}
which is a maximum set-cover problem. The maximum set-cover problem is well-known submodular optimization problem and can be solved using a greedy algorithm: Given the current choice $D_i$ of decoys at iteration $i$,  the greedy algorithm  selects a new decoy $s\in \calD\setminus D_i$ that covers the greatest number of uncovered states in $\win_2^2$. This selection iterates until $k$ decoys are selected. It is also known that the greedy algorithm is $(1-1/e)$-approximate. The reader is referred to \cite{vazirani_approximation_2003} for more details.

Let $f^\cup(D) = \left| \bigcup\limits_{s \in D} \dswin_{\{s\}} \right|$ and $f^\uplus(D) = \left| \biguplus\limits_{s \in D} \dswin_{\{s\}} \right|$. It follows from Corollary~\ref{cor:subset} that $f^\cup(D) \leq f^\uplus(D)$ for all $D \subseteq \calD$. In other words, $f^\cup(D)$ under-approximates the effectiveness of allocating the states in $D$ as decoys, which is captured by $f^\uplus(D)$.

While the function $f^\cup$ is submodular, a similar sub/supermodularity condition does not necessarily hold for the function $f^\uplus$. In the sequel, we provide sufficient conditions on when $f^\uplus$ is submodular and when it is supermodular.

\begin{theorem}
    \label{thm:composition}
    The following statements about $f^\uplus(D)= \left| \biguplus\limits_{s \in D} \dswin_{\{s\}} \right|$ are true. 
    \begin{enumerate}
    \item[(a)] $f^\uplus $ is monotone and non-decreasing.
        \item[(b)] $f^\uplus$ is submodular if $\dswin_{D \cup \{s\}} = \dswin_{D} \cup \dswin_{\{s\}}$ for all $D \subseteq \calD$ and $s \in \calD$.
        
        \item[(c)] $f^\uplus$ is supermodular if $\dswin_{D} = \dswin_{D \cup \{s_1\}} \cap \dswin_{D \cup \{s_2\}}$ for all $D \subseteq \calD$ and all $s_1, s_2, \in \calD$.
    \end{enumerate}
\end{theorem}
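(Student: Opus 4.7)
The key preliminary observation is that unfolding the composition operator via its defining identity $\dswin_D \biguplus \dswin_{\{s\}} = \dswin_{D \cup \{s\}}$ yields, by a straightforward induction on $|D|$, that $\biguplus_{s \in D} \dswin_{\{s\}} = \dswin_D$. Hence $f^\uplus(D) = |\dswin_D|$, and the proof reduces to establishing monotonicity and the stated sub/super\-modularity conditions for the set function $D \mapsto |\dswin_D|$. I would open the proof by recording this identification; it lets the rest of the argument proceed purely set-theoretically on subsets of $\win_2^2$.

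For part (a), I would apply Corollary~\ref{cor:subset} directly: for any $D \subseteq \calD$ and $s \in \calD$, $\dswin_D \subseteq \dswin_D \cup \dswin_{\{s\}} \subseteq \dswin_{D \cup \{s\}}$. Iterating this inclusion element by element shows that $D_1 \subseteq D_2$ implies $\dswin_{D_1} \subseteq \dswin_{D_2}$, hence $f^\uplus(D_1) \le f^\uplus(D_2)$.

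For part (b), I would use the definition of submodularity via diminishing marginal returns: for $D_1 \subseteq D_2 \subseteq \calD$ and $s \in \calD \setminus D_2$, show that $f^\uplus(D_1 \cup \{s\}) - f^\uplus(D_1) \ge f^\uplus(D_2 \cup \{s\}) - f^\uplus(D_2)$. The hypothesis $\dswin_{D \cup \{s\}} = \dswin_D \cup \dswin_{\{s\}}$ turns each marginal gain into a set-difference cardinality,
\[
f^\uplus(D \cup \{s\}) - f^\uplus(D) \;=\; |\dswin_D \cup \dswin_{\{s\}}| - |\dswin_D| \;=\; |\dswin_{\{s\}} \setminus \dswin_D|.
\]
Combining this with monotonicity (part (a)), which gives $\dswin_{D_1} \subseteq \dswin_{D_2}$ and hence $\dswin_{\{s\}} \setminus \dswin_{D_2} \subseteq \dswin_{\{s\}} \setminus \dswin_{D_1}$, yields the desired inequality.

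For part (c), I would work with the symmetric form of supermodularity: for $D \subseteq \calD$ and $s_1, s_2 \in \calD \setminus D$ with $s_1 \neq s_2$, prove
\[
f^\uplus(D \cup \{s_1, s_2\}) + f^\uplus(D) \;\ge\; f^\uplus(D \cup \{s_1\}) + f^\uplus(D \cup \{s_2\}).
\]
Write $A = \dswin_{D \cup \{s_1\}}$ and $B = \dswin_{D \cup \{s_2\}}$. Monotonicity (from part (a)) gives $A \cup B \subseteq \dswin_{D \cup \{s_1, s_2\}}$, and the hypothesis in (c), applied with $D$ replaced by $D$ and $s_1, s_2$ as given, states exactly that $\dswin_D = A \cap B$. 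By inclusion-exclusion,
\[
|\dswin_{D \cup \{s_1, s_2\}}| + |\dswin_D| \;\ge\; |A \cup B| + |A \cap B| \;=\; |A| + |B|,
\]
which is the supermodularity inequality. The main conceptual obstacle in this part is recognizing that the hypothesis in (c) is precisely what is needed to replace the potentially strict containment $\dswin_D \subseteq A \cap B$ (which monotonicity alone would give) with equality, so that inclusion-exclusion closes the argument; everything else is routine.
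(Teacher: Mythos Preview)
Your proposal is correct and follows essentially the same approach as the paper. The only presentational difference is in part~(b): the paper observes that the hypothesis collapses $f^\uplus$ to $f^\cup$ and then simply cites the standard submodularity of the maximum set-cover objective, whereas you carry out the diminishing-returns computation explicitly via $f^\uplus(D\cup\{s\})-f^\uplus(D)=|\dswin_{\{s\}}\setminus\dswin_D|$; these are the same argument at different levels of unpacking, and your part~(c) via inclusion--exclusion is exactly the paper's ``counts every state once'' reasoning written out formally.
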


\begin{proof} 
(\textbf{a}).   Based on Corollary ~\ref{cor:subset}, for any set $D\subseteq \calD$ and a state $s\in \calD\setminus D$, $f^\uplus(D)=\abs{\dswin_{D} } $ and $f^\uplus(D \cup \{s\})  = \abs{\dswin_{D\cup\{s\}} } $, because $\dswin_{D}\subseteq \dswin_{D\cup\{s\}} $, $f^\uplus(D) \le f^\uplus (D\cup \{s\})$.

    (\textbf{b}). When $\dswin_{D \cup \{s\}} = \dswin_{D} \cup \dswin_{\{s\}}$, we can write $f^\uplus(D) = \left| \biguplus \limits_{s \in D} \dswin_{\{s\}} \right| = \left| \bigcup \limits_{s \in D} \dswin_{\{s\}} \right| = f^\cup(D)$, which is submodular. %By Proposition~\ref{prop:submodular}, the result follows. 

   (\textbf{c}).  We will show that 
    \[ LHS\coloneqq f^\uplus(D \cup \{s_1\}) + f^\uplus(D \cup \{s_2\}) - f^\uplus(D) \leq f^\uplus(D \cup \{s_1, s_2\}) \coloneqq RHS\]
    for all $D \subseteq \calD$ and all $s_1, s_2 \in \calD$. 
    Given that $\dswin_{D} = \dswin_{D \cup \{s_1\}} \cap \dswin_{D \cup \{s_2\}}$ holds for any $D \subseteq \calD$ and any $s_1, s_2 \in \calD$, we have that 
    $f^\uplus(D \cup \{s_1\}) + f^\uplus(D \cup \{s_2\}) - f^\uplus(D)$ counts every state in $\dswin_{D\cup\{s_1\}} \cup \dswin_{D\cup\{s_2\}}$ exactly once. On the other hand, we have $f^\uplus(D \cup \{s_1, s_2\}) = \abs{\dswin_{D \cup \{s_1, s_2\}}}$ and $\dswin_{D \cup \{s_1, s_2\}} \supseteq \dswin_{D \cup \{s_1\}} \cup \dswin_{D \cup \{s_2\}}$, by Corollary~\ref{cor:subset}. Thus, there may exist a state in $\dswin_{D \cup \{s_1, s_2\}}$ which is not included in either $\dswin_{D \cup \{s_1\}}$ or $\dswin_{D \cup \{s_2\}}$. In other words, RHS may be greater than or equal to LHS and the statement follows. 
\end{proof}

Based on Thm.~\ref{thm:composition}, we now propose a greedy algorithm  described in Alg.~\ref{alg:greedymax}. This greedy algorithm is an extension of the GreedyMax algorithm for maximizing monotone submodular-supermodular functions in \cite{bai_greed_2018}. It starts with an empty set of states labeled with $\decoy$ and incrementally adds new decoys in the the game arena. At each step, given the deceptive winning region of the chosen decoys,  a new decoy is selected such that by adding the new decoy into the chosen decoys, P1's deceptive sure-winning region covers the largest number of states in $\win_2^2$. The algorithm iterates until $k$ decoys are added, where $k$ is the upper bound on the number of decoys.

\begin{algorithm}
\SetAlgoLined
\KwIn{P1's deceptive reachability game $\langle \win_2^2, \act, \widehat \Delta, F_D=\emptyset\rangle$, the set $\calD\subseteq S$, the bound $k$ on the number of decoys.}
\KwOut{An approximate  solution $\overline D$ for the optimization problem in Eq.~\ref{eq:composition_opt}.}
$\overline D\gets \emptyset$\;
$\dswin_{\overline{D}} \gets \emptyset$\;
 \While{$\card{\overline D}< k$}{
 \For{$s\in \calD\setminus \overline{D}$}{$\game_s \gets \langle \win_2^2, \act, \widehat \Delta, \dswin_{\{s\}}\cup \dswin_{\overline{D}} \rangle $\;
 $\dswin_{\{s\}\cup \overline{D}} \gets \mbox{Sure-Win}(\game_s)$; \hspace{10.5em} \text{... by Alg.\ref{alg:sure-win}}\; 
 }
  $s^\ast \gets \arg\max_{s\in \calD\setminus \overline{D}}\left| \dswin_{\{s\}\cup \overline D}\right|$\;
  $\overline{D}\gets s^\ast \cup \overline{D}$\;
  }
  \Return{$\overline{D}$}
 \caption{GreedyMax Algorithm for Decoy Allocation}
 \label{alg:greedymax}
\end{algorithm}

\paragraph*{Complexity Analysis} The complexity of Alg.~\ref{alg:greedymax} is $\mathcal{O}(k\abs{\calD}N)$ where $N$ is the number of state-action pairs in P1's deceptive reachability game. This is because  to add $(i+1)$-th state to $\overline{D}$, we update deceptive sure-winning regions of $\abs{\calD} - i$ states. The complexity of solving a reachability game is linear in the size $N$ of the game, measured by the number of state-action pairs.

\setcounter{example}{1}
\begin{example}[Part 5]
    \label{ex:running-example-5}
    We maximize $\abs{\dswin_D}$, under the constraint that  a maximal two decoys to be placed within the set $\calD= \{j, k, l, m\}$. Following the compositional approach, we compute the following deceptive sure-winning regions:  $\dswin_{\{j\}} = \{(j, 1), (f, 1)\}$, $\dswin_{\{k\}} = \{(k, 1), (k, 2), (f, 1), (g, 2)\}$, $\dswin_{\{l\}} = \{(l, 0), (h, 0)\}$ and $\dswin_{\{m\}} = \{(m, 0),$ $(i, 0), (e, 0)\}$.

    First, we use the greedy algorithm for maximum set-cover to solve for $D \subseteq \calD$ that maximizes $f^{\cup}(D)$ under the constraint $\card{D}\le 2$. In the first iteration, the greedy algorithm selects the largest the state corresponding to $\abs{\dswin_{\{s\}}}$, which is $s = k$. In the second iteration, it selects the set that has the largest number of states not already included in $\dswin_{\{k\}}$. Thus, it selects $m$ as the second state to place the decoy. In conclusion, it selects $D = \{k, m\}$ as solution to decoy allocation problem, for which $\abs{\dswin_{\{k, m\}}} = 7$.

    Second, we use Alg.~\ref{alg:greedymax} to solve for $D \subseteq \calD$ that maximizes $f^{\uplus}(D)$ under the constraint $\card{D}\le 2$. In the first iteration, $s^\ast$ is selected to be $k$ because $\abs{\dswin_{\{k\}}}$ is the largest. In the second iteration, $s^\ast$ is selected to be $l$ because $\dswin_{\{l\} \cup \overline D} = \{(l, 0), (h, 0), (k, 1), (k, 2), (f, 1), (g, 2), (c, 0), (d, 0)\}$. In conclusion, it selects $D = \{k, l\}$ as solution to decoy allocation problem, for which $\abs{\dswin_{\{k, l\}}} = 8$, which coincidentally in this example is also the globally optimal solution for the problem. We note the improvement in the solution is attributed to incremental computation of $\dswin_{D \cup \{s\}}$ in Alg.~\ref{alg:greedymax}. 
\end{example}   

Due to space limitation, we omit other examples with larger game arena. But the interested readers can find more examples  in which the decoy allocation problems are solved with both the greedy algorithm for submodular optimization and Alg.~\ref{alg:greedymax} in \href{https://github.com/abhibp1993/decoy-allocation-problem}{https://github.com/abhibp1993/decoy-allocation-problem}.

% ===============================================
% SECTION   
% ===============================================
\section{Conclusion}
    \label{sec:conclusion}

In this paper, we investigated the optimal decoy allocation problems in a class of games where players' objectives are specified in temporal logic and players have asymmetric information. The contributions of the paper are twofold: First, we develop a hypergame on graph model to capture the deceivee (the adversary)'s  incomplete and incorrect information due to the decoys and the deceiver (the defender)'s information about the deceivee's information. Using decoy-based deception, we designed algorithms to compute a deceptive sure-winning strategy with which the defender can take actions deceptively and lure the adversary into decoys, from a state where the adversary perceives herself a winner (\ie, has a strategy to achieve the attack objective). Second, to compute the optimal choice of decoy locations, we employed compositional synthesis from formal methods and proved that the optimal decoy allocation problem is   monotone, and non-decreasing. However, the problem can be submodular or supermodular or neither in different games. We design two greedy algorithms, one is based on maximizing an under-approximation of the deceptive winning regions given the effectiveness of individual decoys using maximum set cover, another is to use submodular-supermodular optimization to find approximate solutions of the optimal decoy placement.

Future work  include the study of decoy allocation with other types of decoy-induced misperception. In this scope, the decoys are set up as ``traps'' for the adversary. But it is possible to use decoys as ``fake targets'' for distracting the adversary. We intend to explore a mixture of types of  decoys given their functionalities in cyber-physical defense and the respective deceptive synthesis problems and decoy-allocation problems. Also, we are interested in deceptive planning for other class of games, for example, concurrent(\ie, simultaneous-move) reachability games \cite{de_alfaro_concurrent_2007}.  We intend to implement a toolbox for the proposed algorithm and apply the  methods to practical network security problems.

\bibliographystyle{splncs04}
\bibliography{refs}

\end{document}